\definecolor{porange}{HTML}{E77500} 
\definecolor{purple}{HTML}{A020F0} 
\algrenewcommand{\algorithmiccomment}[1]{\textcolor{porange}{\hfill// #1}}
\algnewcommand{\LineComment}[1]{\Statex \textbf{\textcolor{porange}{// #1}}}
\newtheorem{theorem}{Theorem}
\newtheorem{lemma}{Lemma}    
\newtheorem{remark}{Remark}
\newtheorem{corollary}{Corollary}
\newtheorem{proposition}{Proposition}    
\newtheorem{assumption}{Assumption}
\newcommand{\jaime}[1]{\ifthenelse{\boolean{include-notes}}{\textcolor{orange}{\textbf{Jaime:} #1}}{}}
\newcommand{\haimin}[1]{\ifthenelse{\boolean{include-notes}}{\textcolor{magenta}{\textbf{Haimin:} #1}}{}}
\newcommand{\todo}[1]{\ifthenelse{\boolean{include-notes}}{\textcolor{teal}{\textbf{TODO:} #1}}{}}
\newcommand{\remove}[1]{\ifthenelse{\boolean{include-remove}}{\textcolor{red}{\sout{#1}}}{}}
\newcommand{\new}[1]{\ifthenelse{\boolean{include-new}}{\textcolor{purple}{#1}}{#1}}
\newcommand{\real}{\operatorname{Re}}
\newcommand{\reals}{\mathbb{R}}
\newcommand{\distr}{p}
\newcommand{\prob}{P}
\DeclareMathOperator*{\expectation}{\mathbb{E}}
\newcommand{\grad}{{\nabla}}
\newcommand{\jacobian}{{J}}
\newcommand{\hessian}{{H}}
\DeclareMathOperator*{\argmin}{arg\,min}
\newcommand{\diag}{\operatorname{diag}}
\newcommand{\blkdiag}{\operatorname{blkdiag}}
\newcommand{\spec}{\operatorname{spec}}
\newcommand{\opinionDyn}{{g}}
\newcommand{\attDyn}{\opinionDyn_\attstate}
\newcommand{\opnstate}{{z}}
\newcommand{\bopnstate}{\bar{\opnstate}}
\newcommand{\dopnstate}{\delta\opnstate}
\newcommand{\ddopnstate}{\delta\dot{\opnstate}}
\newcommand{\vopnstate}{\mathbf{\opnstate}}
\newcommand{\bvopnstate}{\bar{\vopnstate}}
\newcommand{\dvopnstate}{\delta\vopnstate}
\newcommand{\ddvopnstate}{\delta\dot{\vopnstate}}
\newcommand{\attstate}{{\lambda}}
\newcommand{\battstate}{{\bar{\attstate}}}
\newcommand{\bias}{{b}}
\newcommand{\damping}{{d}}
\newcommand{\saturation}{{S}}
\newcommand{\softmax}{{\sigma}}
\newcommand{\poi}{{\operatorname{PoI}}}
\newcommand{\opnidx}{{\ell}}
\newcommand{\opnidxaux}{{p}}
\newcommand{\iagent}{{i}}
\newcommand{\nagents}{{N_a}}
\newcommand{\ntheta}{{N_{\theta}}}
\newcommand{\nthetai}{{N_{\theta^i}}}
\newcommand{\iagentaux}{{j}}
\newcommand{\state}{{x}}
\newcommand{\ctrl}{{u}}
\newcommand{\csig}{{\mathbf{u}}}
\newcommand{\cset}{{\mathcal{U}}}
\newcommand{\dyn}{{f}}
\newcommand{\valfunc}{{V}}
\newcommand{\policy}{{\pi}}
\newcommand{\orderset}{\mathcal{I}}
\newcommand{\ordersetagent}{\orderset_{a}}
\newcommand{\ordersetthetai}{\orderset_{\theta_i}}
\newcommand{\ilqrValQuad}{Z}
\newcommand{\ilqrValLinear}{\zeta}
\begin{document}

\title{\Large \bf Emergent Coordination through Game-Induced Nonlinear Opinion Dynamics}
\author{Haimin Hu$^1$, Kensuke Nakamura$^2$, Kai-Chieh Hsu$^1$, Naomi Ehrich Leonard$^2$, and Jaime Fernández Fisac$^1$
\thanks{$^1$Department of Electrical and Computer Engineering, Princeton University, {\tt\small \{haiminh,kaichieh,jfisac\}@princeton.edu}}
\thanks{$^2$Department of Mechanical and Aerospace Engineering, Princeton University, {\tt\small \{k.nakamura,naomi\}@princeton.edu}}
}
\maketitle

\begin{abstract}
We present a multi-agent decision-making framework for the emergent coordination of autonomous agents whose intents are initially undecided. Dynamic non-cooperative games have been used to encode multi-agent interaction, but ambiguity arising from factors such as goal preference or the presence of multiple equilibria may lead to coordination issues, ranging from the ``freezing robot'' problem to unsafe behavior in safety-critical events. The recently developed nonlinear opinion dynamics (NOD)~\cite{bizyaeva2022nonlinear} provide guarantees for breaking deadlocks. However, choosing the appropriate model parameters automatically in general multi-agent settings remains a challenge. In this paper, we first propose a novel and principled procedure for synthesizing NOD based on the value functions of dynamic games conditioned on agents’ intents. In particular, we provide for the two-player two-option case precise stability conditions for equilibria of the game-induced NOD based on the mismatch between agents' opinions and their game values. We then propose an optimization-based trajectory optimization algorithm that computes agents’ policies guided by the evolution of opinions. The efficacy of our method is illustrated with a simulated toll station coordination example.
\end{abstract}

\section{Introduction}

As deployments of multi-agent autonomous systems, such as self-driving truck fleets and drone swarms, continue to scale up, there is a pressing need to coordinate efficient interaction between controllable agents and uncontrollable agents, including humans.
While dynamic games (cf.~\cite{bacsar1998dynamic}) capture a rich class of interactive behaviors for multi-agent systems, existing game-theoretic formulations do not effectively coordinate agents when there is uncertainty in key parameters of the game, such as goal preferences~\cite{fridovich2020confidence, liu2023learning} and information structure \cite{zrnic2021leads}, and when there are multiple equilibria~\cite{peters2020inference}.
This can lead to dangerous behavior in which agents adopt policies leading to safety-critical deadlocks, sometimes known as the ``freezing robot'' problem~\cite{trautman2010unfreezing}.
The recently developed nonlinear opinion dynamics model~\cite{bizyaeva2022nonlinear} offers a principled way of describing opinion exchange among agents in multi-agent coordination, including social navigation scenarios.
In particular, this model provides theoretical guarantees for breaking deadlocks even with no prior bias over opinions.
However, choosing appropriate parameters for the model in practical applications remains an open problem.
In this work, we combine the best of both worlds by integrating differential games with nonlinear opinion dynamics to achieve efficient multi-agent coordination.

\subsection{Related Work}
Dynamic games have shown promise in addressing a wide range of multi-agent coordination scenarios, from autonomous driving~\cite{fisac2019hierarchical,sadigh2018planning,zanardi2021urban} and physical human-robot interaction~\cite{li2019differential} to smart-grid networks~\cite{zazo2016dynamic}.
While computing equilibrium solutions of dynamic games is typically challenging, contemporary tools have been created to enable linear-quadratic (LQ) approximations of intricate, non-convex games.
In~\cite{fridovich2020efficient}, the authors take advantage of derivative information of system dynamics and planning objectives to iteratively optimize agent trajectories in a dynamic game.
This has been shown to lead to efficient trajectories for multi-agent collision avoidance scenarios.
However, in the presence of multiple suitable equilibria, this method does not offer a solution to the equilibrium selection problem~\cite{peters2020inference}.
When the intents of a player's opponents are hidden, the dynamic game becomes a partially observable stochastic game~\cite{hansen2004dynamic}.
While intractable in general, such a game can be approximately solved using, for example, the QMDP approach~\cite{littman1995learning,hu2022sharp}, scenario-based planning~\cite{hu2022active}, and Quasi-Newton optimization~\cite{sadigh2018planning}.
In this work, we propose a novel trajectory planning framework that relies on opinion dynamics for handling ambiguities and deadlocks in partially observable stochastic games, while remaining computationally tractable by leveraging the LQ and QMDP approximation techniques.

In~\cite{cathcart2022opinion}, the authors used the nonlinear opinion dynamics model~\cite{bizyaeva2022nonlinear} for rapid and flexible breaking of social deadlock in a corridor passing problem.
In~\cite{park2021tuning}, the authors used the nonlinear opinion dynamics model~\cite{bizyaeva2022nonlinear} to investigate how cooperative behavior can emerge in \emph{static} games that are played repeatedly.
By jointly considering reciprocity and rationality, agents performed cooperatively despite the Nash equilibrium solution being non-cooperative.
The region of attraction to the mutually cooperative equilibrium was shown to increase as attention to social interaction increases.
However, neither of these works provided an answer for \emph{how} the model parameters of nonlinear opinion dynamics should be chosen.
A central contribution of this work is a principled and automatic procedure for constructing nonlinear opinion dynamics based on the outcomes of dynamic games.

 \subsection{Contributions and Paper Organization} 
In this paper, we leverage non-cooperative differential games and nonlinear opinion dynamics (NOD) to propose a novel trajectory planning framework for multi-agent emergent coordination in tasks about which agents are initially undecided.
Our contributions are threefold:
\begin{enumerate}
    \item We propose for the first time an automatic procedure for synthesizing NOD, in which the coupling parameters among opinions depend on the game value functions. We show how the NOD effectively captures opinion evolution driven by the physical state of the system.
    \item  We provide, for the two-player two-option case, precise stability conditions for NOD equilibria based on the mismatch between opinions and game values, which depend on the physical states. 
    \item We present a computationally efficient trajectory planning framework that computes agents’ policies guided by their evolving opinions such that coordination on tasks emerges.
\end{enumerate}

The paper is organized as follows. Sec.~\ref{sec:prelim} provides a brief summary of general-sum differential games and nonlinear opinion dynamics. In Sec.~\ref{sec:formulation}, we formulate the problem of interest as a differential game subject to parameter uncertainties and describe the construction of game-induced nonlinear opinion dynamics (GiNOD). In Sec.~\ref{sec:analysis}, we derive stability conditions for GiNOD. Sec.~\ref{sec:QMDP} presents our main algorithmic contributions for emergent coordination using GiNOD.
We show that our proposed planning approach leads to deadlock-free interactions in a toll station coordination scenario in Sec.~\ref{sec:sim}.
We conclude our work in Sec.~\ref{sec:conclusions}. 

\section{Preliminaries}
\label{sec:prelim}

\subsection{General-Sum Differential Games}
We consider an $\nagents$-player finite-horizon general-sum differential game governed by a nonlinear dynamical system:
\begin{equation}
\label{eq:dyn_sys}
\dot{\state} = \dyn(\state(t), \csig(t)),
\end{equation}
where $t \in \reals$ is the time, $\state \in \reals^{n_x}$ is the state of the system, $\csig(t):=\ctrl^{[1:\nagents]}(t)$ where $\ctrl^\iagent \in \reals^{n_{u_\iagent}}$ is the control of player $\iagent \in \ordersetagent := \{1, 2, \ldots, \nagents\}$.
We assume $\dyn$ is continuous in $t$ and continuously differentiable in $\{\state(t), \csig(t)\}$ uniformly in $t$.
Each player $\iagent$ seeks to minimize a cost functional:
\begin{equation}
\label{eq:cost_orig}
    J^\iagent\left(\policy^{[1: \nagents]} \right) := \int_0^T c^\iagent\left(\state(t), \csig(t) \right) dt,
\end{equation}
where $\policy^\iagent$ is player $\iagent$'s control policy with $\ctrl^\iagent(t) = \policy^\iagent(t, \state(t))$, $c^\iagent(\cdot)$ is the stage cost of player $\iagent$, and we assume $c^\iagent(\cdot)$ is twice differentiable in $\{\state(t), \csig(t)\}, \forall t$.


Finding equilibrium solutions for a general-sum differential game with nonlinear dynamics can be computationally prohibitive in the general case.
\textcite{fridovich2020efficient} approach this problem by finding a feedback Nash equilibrium to a \textit{local} approximation of the original game following an iterative linear-quadratic (ILQ) scheme.
This linearizes the dynamics and quadratizes the step cost function along the nominal trajectory in each iteration.
A finite-horizon continuous-time LQ game can then be constructed for which there exists a closed-form Riccati differential solution~\cite[Chapter~6]{bacsar1998dynamic}.
The resulting approximate feedback Nash equilibrium solution consists of a tuple of linear policies. Player $\iagent$'s policy is $\policy^{\iagent*}(t, \state(t)) = \bar{\ctrl}^\iagent(t) + K^\iagent(t) \delta\state(t) + \kappa^\iagent(t)$
where $\bar{\state}(t)$ is the nominal state and $\bar{\ctrl}^\iagent(t)$ is the nominal control of agent $\iagent$, $K^\iagent (t) \in \reals^{n_{\ctrl^\iagent} \times n_\state}$ are the gains, $\kappa^\iagent(t) \in \reals^{n_{\ctrl^\iagent}}$ are the affine terms, and $\delta\state(t) := \state(t) - \bar{\state}(t)$.
Given all players' approximate Nash equilibrium strategies, the \emph{game value function} of each player $i \in \ordersetagent$ can be locally approximated by a quadratic function:
$\valfunc^\iagent(\state) \approx \frac{1}{2} \delta\state^\top\ilqrValQuad^\iagent \delta\state + \delta\state^\top\ilqrValLinear^\iagent + v^\iagent(\bar{\state})$, where $v^\iagent(\bar{\state})$ is the value for agent~$i$ when following the nominal trajectory. $\ilqrValQuad^i \in \reals^{n_\state \times n_\state}$ and $\ilqrValLinear^i \in \reals^{n_\state}$ represent the quadratic and linear changes in the nominal game value due to small deviations from nominal state $\bar{\state}(t)$.

\subsection{Nonlinear Opinion Dynamics}
The nonlinear opinion dynamics (NOD)~\parencite{bizyaeva2022nonlinear}
model complex opinion-forming behaviors among multiple agents.
For a multi-agent system of $\nagents$ agents, each having $\nthetai$ opinions, the nonlinear opinion dynamics can be expressed as
\begin{equation}
\begin{aligned}
    \label{eq:opn_dyn_orig}
    \dot \opnstate^\iagent &= - d^{\iagent} \opnstate^{\iagent} + \bias^\iagent + \attstate^\iagent \saturation^\iagent_\opnstate (\opnstate^{\iagent}) \\
    \dot \attstate^\iagent &= -m \attstate^\iagent + \saturation^\iagent_\attstate(\vopnstate^{\iagent})
\end{aligned}
\end{equation}
where
\begin{align*}
    \saturation^{\iagent,\ell}_\opnstate (\opnstate^\iagent) =  &\saturation_1\left(
        \alpha^{\iagent} \opnstate^{\iagent}_{\opnidx} +
        \textstyle\sum_{\iagentaux \in \ordersetagent \setminus \{\iagent\} } \gamma^{\iagent \iagentaux} \opnstate^{\iagentaux}_{\opnidx}
    \right) \notag \\
    &+ \textstyle\sum_{\opnidxaux \in \ordersetthetai \setminus \{\opnidx\} } \saturation_2 \left(  
        \beta^{\iagent} \opnstate^{\iagent}_{\opnidxaux} +
        \textstyle\sum_{\iagentaux \in \ordersetagent \setminus \{\iagent\} } \delta^{\iagent \iagentaux}  \opnstate^{\iagentaux}_{\opnidxaux}
    \right).
\end{align*}
Here, $\ordersetthetai := \{1, 2, \ldots, \nthetai\}$, $\opnstate^\iagent \in \reals^{N_{\theta^i}}$ is agent $\iagent$'s opinion vector in which an element $\opnstate_{\ell}^\iagent > 0$ ($\opnstate_{\ell}^\iagent < 0$) if agent $\iagent$ favors (disfavors) option $\ell \in \mathcal{I}_{\theta^i}$, $\damping^\iagent > 0$ is the damping term, $\bias^\iagent$ represents agent $\iagent$'s own bias, $\attstate^\iagent > 0$ is the attention weight on nonlinear opinion-exchange, $\alpha^{\iagent} \geq 0$ is the self-reinforcement gain, $\beta^{\iagent} \geq 0$ is the same-agent inter-option coupling gain, $\gamma^{\iagent\iagentaux}$ is the gain of the same-option inter-agent coupling with other agent $\iagentaux$, 
$ \delta^{\iagent \iagentaux}$ is the gain of the inter-option inter-agent coupling with other agent $\iagentaux$.
$\saturation_r,~r \in \{1,2\}$, is a nonlinear saturation function satisfying $\saturation_r(0)=0,~\saturation^{\prime}_r(0)=1,~\saturation^{\prime\prime}_r(0)\neq 0,~\saturation^{\prime\prime\prime}_r(0)\neq 0$, e.g., a sigmoid function or the hyperbolic tangent function $\tanh$.

The NOD capture and enable a wide range of dynamical interactions and behaviors, notably, fast and flexible multi-agent decision-making. However, it remains an open problem to synthesize parameters for design that capitalizes on these features.
One key contribution of this paper is a novel and principled algorithmic approach for automatically synthesizing NOD from a set of differential game value functions.

\section{Modeling Indecision in Differential Games using Nonlinear Opinion Dynamics}
\label{sec:formulation}

In this section, we derive a class of nonlinear opinion dynamics, which can be automatically synthesized based on dynamic game solutions, and are later used in the emergent coordination planning framework to be presented in Sec.~\ref{sec:QMDP}.

\subsection{Differential Games with Stochastic Parameters}

In this paper, we are interested in differential games where each player's game value functions are dependent on a set of stochastic parameters, i.e. $\valfunc^i(\state) = \valfunc^i(\state; \theta^1, \ldots, \theta^{\nagents})$.
We assume that the  parameter value $\theta^i$ of each player~$i$ is supported on a discrete set, i.e. $\theta^i \in \Theta^i := \{\theta^i_1,\ldots,\theta^i_{\nthetai}\}$, which is known to all other players.
We further allow  $\Theta^\iagent$ to be \emph{heterogeneous} for different players.
Consequently, parameter 
$\theta^\iagent \sim \distr(\theta^i) := (\prob(\theta^\iagent_1),\ldots,\prob(\theta^\iagent_{\nthetai}))$ where $\distr(\theta^i)$ is a categorical distribution over the $(N_{\theta^\iagent}-1)$-simplex.
The parametrized game value function can capture a broad class of differential games with \emph{categorically different} outcomes.
We provide three examples:
\begin{enumerate}
    \item \emph{Tracking objectives.} We can encode different tracking objectives in players' stage costs.
    Specifically, we assume player~$i$'s stage cost $c^i(\cdot)$ in~\eqref{eq:cost_orig} can be decomposed into two parts: $c^\iagent\left(\cdot, \cdot; \theta^\iagent \right) := c_{I}^\iagent\left(\cdot, \cdot\right) + c_{D}^\iagent\left(\cdot, \cdot; \theta^\iagent\right)$ where $c_{I}^\iagent\left(\cdot, \cdot\right)$ is the parameter-independent part that captures, e.g., regular control objectives and safety specifications, and $c_{D}^\iagent\left(\cdot, \cdot; \theta^\iagent\right)$ depends on the parameter $\theta^\iagent$, which encodes the player's tracking objectives.
    We present a running example below for illustration.

    \item \emph{Multiple distinct equilibria.} It is common for multiple equilibria to arise in a differential game~\cite{peters2020inference}.
    In this case, let $\Theta^\iagent \equiv \Theta$ be homogeneous for all players $i \in \ordersetagent$ and each $\theta_{\ell} \in \Theta$ represent a particular equilibrium solution of the game.
    For example, $\theta_1$ represents the equilibrium solution in which player 1 yields to player 2, and $\theta_2$ in which player 2 yields to player 1. 
    
    \item \emph{Information structure.} Similar to the multiple equilibria case, different information structures in a leader-follower (Stackelberg) game~\cite{fisac2019hierarchical} can also define the parameter set $\Theta$ and affect the topology of players' trajectories. 
    For example, $\theta_1$ represents a game where player 1 is the leader and $\theta_2$ where player 2 is the leader.
\end{enumerate}

\noindent \textbf{Running example:}
Consider two autonomous vehicles proceeding towards a toll station as depicted in Fig.~\ref{fig:toll}.
The toll booths are modeled as static obstacles that the cars shall avoid.
Additional safety-critical specifications include avoiding collisions with other vehicles and driving off the road.
For each  vehicle $i \in \{1,2\}$, let $\Theta^i = \{\theta^i_1, \theta^i_2\}$, where $\theta^\iagent_1$ ($\theta^\iagent_2$) represents the option that vehicle $i$ goes through toll booth 1 (toll booth 2).
The parameter-independent stage cost $c_{I}^\iagent\left(\cdot, \cdot\right)$ captures velocity tracking, safety specifications, and fuel consumption.
The parameter-dependent stage cost $c_{D}^\iagent\left(\state^i; \theta^\iagent\right) := -w^i_{\theta_i} \mathbf{1}[\state^i \in \mathcal{T}_{\theta^i}]$, where $w^i_{\theta_i} > 0$ is the weight and $\mathbf{1}[\cdot]$ is the indicator function, produces a reward (negative cost) when vehicle $i$ is inside the light green \emph{target region} $\mathcal{T}_{\theta^i}$, and zero reward otherwise.

\begin{figure}[!hbtp]
  \centering
  \includegraphics[width=1.0\columnwidth]{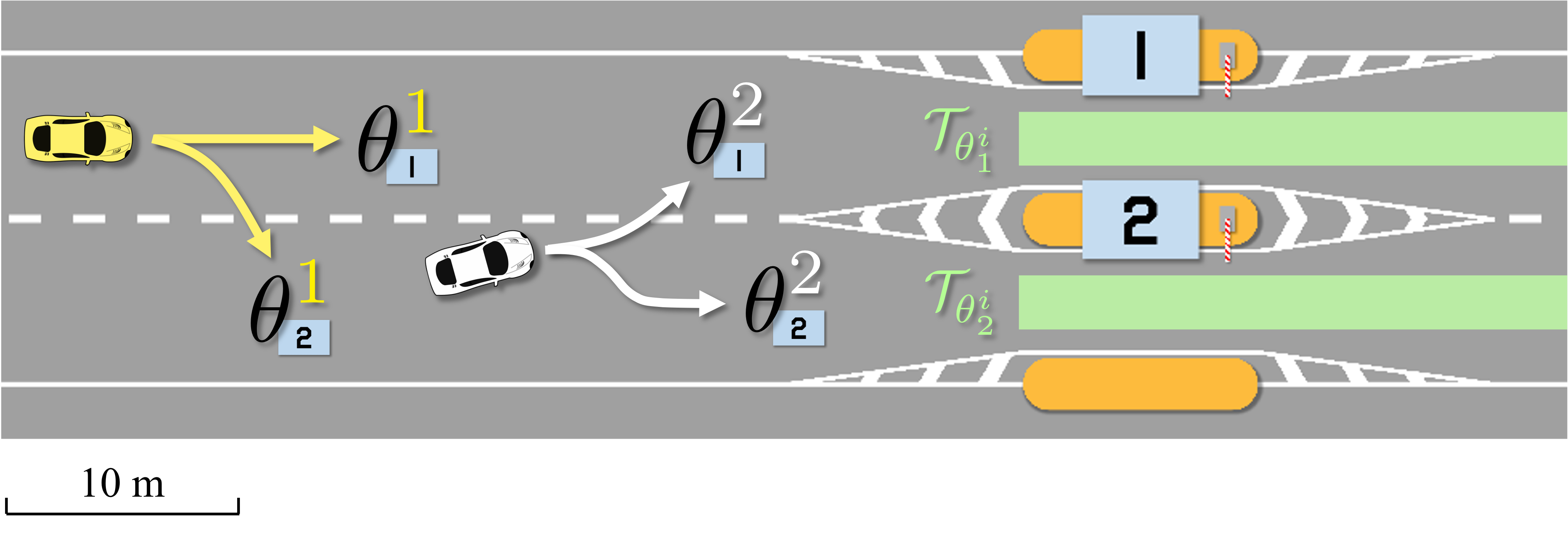}
  \caption{\label{fig:toll} Emergent coordination involving two autonomous cars at a toll station.
  Superscripts of parameters $\theta$ denote the agent number and subscripts denote the agent's preferred toll booth.
  }
\end{figure}

\begin{remark}
The above game formulation is related to the partially observable stochastic game (POSG)~\cite{hansen2004dynamic} in that opponent agents' parameters are uncertain.
Our formulation \emph{additionally} models the ego agent's parameter as a random variable, which represents the ego agent's \emph{indecision}.
\end{remark}

\subsection{QMDP Approximation and Subgames}
\label{sec:formulation:QMDP}

In this paper, we adopt the QMDP approximation technique~\cite{littman1995learning} in the game-theoretic setting for tractable computation of players' strategies.
Each player~$i$'s game value function under parameter uncertainty is computed by solving a QMDP planning problem:
\begin{equation}
\label{eq:approx_vfunc_prob}
\begin{aligned}
    &\tilde{\valfunc}^{i}(\state; \distr(\theta^1), \ldots, \distr(\theta^\nagents)) := \textstyle\min_{\ctrl^i \in \cset^i} c_I^i(\state, \ctrl^{[1:\nagents]}) + \\
    &\quad~\ \expectation_{\theta^i \sim \distr(\theta^i),~\forall i \in \ordersetagent} \left[ \valfunc^\iagent(\state^+; \theta^1, \ldots, \theta^{\nagents}) \right] \\
    &\state^+ = \bar{\dyn} \left(\state, \ctrl^i, \{\policy^j(\state; \theta^1, \ldots, \theta^{\nagents}) \}_{j \in \ordersetagent \setminus \{i\}} \right),
\end{aligned}
\end{equation}
where set $\cset^i \subseteq \reals^{n_{u_\iagent}}$ encodes the control limit of agent~$i$, $c_{I}^\iagent\left(\cdot, \cdot\right)$ is the parameter-independent stage cost, game value function $\valfunc^\iagent(\state; \theta^1, \ldots, \theta^{\nagents})$ results from equilibrium policies under a set of \emph{fully revealed} parameters, $\bar{\dyn}(\cdot)$ is the time-discretized dynamics~\eqref{eq:dyn_sys}, and $\policy^j(\state; \theta^1, \ldots, \theta^{\nagents})$ is player $j$'s equilibrium policy of the game parametrized by $(\theta^1, \ldots, \theta^{\nagents})$.
We call the game governed by a particular set of parameters $(\theta^1, \ldots, \theta^{\nagents})$ a \emph{subgame}.
In this paper, we solve for approximate subgame value functions using the ILQ Game method~\cite{fridovich2020efficient}.
QMDP~\eqref{eq:approx_vfunc_prob} optimistically assumes that the parameter uncertainties disappear in one step after \emph{the ego agent} takes an action, and that all opponents are \emph{clairvoyant} playing their corresponding subgame policies.

\begin{assumption}
We assume the availability of Subroutine $\mathcal{S}$ that computes (approximate) subgame value functions $\valfunc^i(\state; \theta^1, \ldots, \theta^{\nagents})$ for all players $i \in \ordersetagent$ and all possible parameter combinations $\theta^1 \in \Theta^1,\ldots,\theta^{\nagents} \in \Theta^{\nagents}$.
\end{assumption}

QMDP~\eqref{eq:approx_vfunc_prob} is now tractable as a single-agent trajectory optimization problem, and can be solved repeatedly in a receding horizon fashion.
Nonetheless, propagating parameter uncertainties $\distr(\theta^i)$ can still be challenging.
Existing works such as~\cite{peters2020inference,hu2022active} use Bayesian inference~\cite{chen2003bayesian} to propagate $\distr(\theta^i)$ based on observations of opponents' behaviors.
Drawbacks of Bayesian approaches include intractability for general (multi-modal) distributions and difficulty in defining an appropriate likelihood model.
In this paper, we take an alternative approach by interpreting $\distr(\theta^i)$ as the agent's \emph{degree of indecision}, and model its time evolution using nonlinear opinion dynamics.

\noindent \textbf{Running example:}
There are four subgames resulting from the parameter combinations $(\theta^1_1,\theta^2_1)$, $(\theta^1_1,\theta^2_2)$, $(\theta^1_2,\theta^2_1)$, and $(\theta^1_2,\theta^2_2)$, which encode agents' choices to go through a particular toll booth.
We solve each subgame using the ILQ method~\cite{fridovich2020efficient} for agents' state trajectories that correspond to an approximate local Nash equilibrium solution.

\subsection{From Probabilities to Opinions}
We let opinion state vector $\opnstate^i \in \reals^{N_{\theta^i}}$ model distribution $\distr(\theta^i)$, leveraging the softmax operation:
\begin{equation}
\label{softmax}
    \prob(\theta^i_\ell) \gets \softmax_\ell(\opnstate^i) := \frac{e^{\opnstate^i_\ell}}{\sum_{q=1}^{N_{\theta^i}} e^{\opnstate^i_q}},
\end{equation}
and define $\softmax(\opnstate^i) := \left(\softmax_1(\opnstate^i), \ldots, \softmax_{N_{\theta^i}}(\opnstate^i)\right)$.

By delegating uncertainties to opinions, given a physical state $\state$ and all players' opinions $\vopnstate := (\opnstate^1, \ldots, \opnstate^\nagents)$, we can define player~$i$'s \emph{opinion-weighted} game value function:
\begin{equation}
\begin{aligned}
\label{eq:approx_vfunc_opn}
    \hat{\valfunc}^{i}&(\vopnstate, \state) := \\
    &\sum_{\ell_1=1}^{N_{\theta^1}} \cdots \sum_{\ell_{\nagents}=1}^{N_{\theta^{\nagents}}} \left(\prod_{i=1}^\nagents \softmax_{\ell_i}(\opnstate^i)\right) \valfunc^i(\state; \theta^1_{\ell_1}, \ldots, \theta^{\nagents}_{\ell_{\nagents}}),
\end{aligned}
\end{equation}
as a proxy for $\expectation_{\theta^1, \ldots, \theta^{\nagents}} \left[ \valfunc^\iagent(\state; \theta^1, \ldots, \theta^{\nagents}) \right]$, the expected game value in QMDP~\eqref{eq:approx_vfunc_prob}.

\begin{remark}
The neutral opinion $\opnstate = (0,\ldots,0)$ corresponds to an (uninformative) uniform distribution $\distr(\theta) \gets \softmax(\opnstate) = \frac{1}{\ntheta} \mathbf{1}_{\ntheta}$, where $\mathbf{1}_{\ntheta} \in \reals^{\ntheta}$ is a vector of all ones.
\end{remark}

\subsection{Synthesizing Opinion Dynamics from Subgames}
\label{sec:formulation:GiNOD}

\noindent \textbf{Opinion evolution as gradient flow.}
In this section, we propose a \emph{constructive way} to synthesize opinion dynamics from subgames.
We model the agents as selfish players who seek to drive their opinions in a direction that minimizes their expected game value.
To this end, we let all agents implement the gradient flow~\cite{hochreiter2001gradient} dynamics (the continuous-time counterpart of gradient descent) that describe the evolution of their opinion states:
\begin{equation}
\label{eq:grad_flow}
    \dot{\vopnstate} = \begin{bmatrix}
    -\grad_{\opnstate^i} \hat{\valfunc}^{i}(\vopnstate, \state)
    \end{bmatrix}_{i \in \ordersetagent}
\end{equation}
where $\left[\cdot\right]_{i \in \ordersetagent}$ denotes vector concatenation by rows.

\noindent \textbf{Linear opinion dynamics.}
The local behavior of gradient flow~\eqref{eq:grad_flow} around a given $\bar{\vopnstate}$ induces a linear opinion dynamics model, originally introduced as a weighted-averaging process 
in~\cite{Abelson1964}.
This is given by linearizing~\eqref{eq:grad_flow} at $\bar{\vopnstate}$:
\begin{equation}
\label{eq:lin_opn_dyn_orig}
    \ddvopnstate = \begin{bmatrix}
    -\hessian_{11}^1(\state) & \cdots & -\hessian_{1 \nagents}^1(\state) \\
    \vdots & \ddots & \vdots \\
    -\hessian_{\nagents 1}^{\nagents}(\state) & \cdots & -\hessian_{\nagents \nagents}^{\nagents}(\state)
    \end{bmatrix}
    \begin{bmatrix}
    \dopnstate^1 \\
    \vdots \\
    \dopnstate^{\nagents}
    \end{bmatrix},
\end{equation}
where $\ddvopnstate := \vopnstate - \bar{\vopnstate}$, $\dopnstate^i := \opnstate^i - \bar{\opnstate}^i$, and
\begin{equation*}
    \hessian_{ij}^i(\state) := \left.\grad_{\opnstate^i\opnstate^j} \hat{\valfunc}^{i}(\vopnstate, \state)\right|_{\bar{\vopnstate}}
\end{equation*}
is the Hessian matrix of $\hat{\valfunc}^{i}(\cdot)$ with respect to $\opnstate^i$ and $\opnstate^j$, evaluated at $\bar{\vopnstate}$.
We can rewrite linear opinion dynamics~\eqref{eq:lin_opn_dyn_orig} equivalently as
\begin{equation}
\begin{aligned}
\label{eq:lin_opn_dyn_sub}
    \ddopnstate^i_\ell = &\alpha^i_\ell(\state) \dopnstate^i_\ell
    + \textstyle \sum_{\substack{j \neq i \\ j=1}}^{\nagents} \gamma^{ij}_\ell(\state) \dopnstate^j_\ell
    \\
    &+\textstyle\sum_{\substack{p \neq \ell \\ p=1}}^{\nthetai} \beta^i_{\ell p}(\state) \dopnstate^i_p
    + \textstyle\sum_{\substack{p \neq \ell \\ p=1}}^{\nthetai} \sum_{\substack{j \neq i \\ j=1}}^{\nagents} \eta^{ij}_{\ell p}(\state) \dopnstate^j_p,
\end{aligned}
\end{equation}
for player~$i$'s parameter $\theta^i_\ell$, where the \emph{state-dependent} dynamics parameters are defined as
\begin{subequations}
\label{eq:dyn_param}
\begin{align}
    \alpha^i_\ell(\state) &:= -\left[ \hessian_{ii}^i(\state) \right]_{\ell\ell} \\
    \gamma^{ij}_\ell(\state) &:= -\left[ \hessian_{ij}^i(\state) \right]_{\ell\ell} \\
    \beta^i_{\ell p}(\state) &:= -\left[ \hessian_{ii}^i(\state) \right]_{\ell p} \\
    \eta^{ij}_{\ell p}(\state) &:= -\left[ \hessian_{ij}^i(\state) \right]_{\ell p}
\end{align}
\end{subequations}
Here, $\left[\hessian(\cdot)\right]_{\ell p}$ denotes the entry located at the $\ell$-th row and $p$-th column of $\hessian(\cdot)$.
The reasoning behind those dynamics parameters is the same as~\eqref{eq:opn_dyn_orig} with additional flexibility to model inter-option dependency (see Appendix A in \cite{bizyaeva2022nonlinear}).
The linearized gradient flow dynamics~\eqref{eq:lin_opn_dyn_orig} synthesized from subgames fall into the category of linear opinion dynamics~\cite{Abelson1964}, also known, in discrete time, as the DeGroot model~\cite{degroot1974reaching}.

\noindent \textbf{Game-induced nonlinear opinion dynamics.}
Motivated by the recent discovery made in~\parencite{bizyaeva2022nonlinear} that the agreement and disagreement equilibria in linear opinion dynamics are not structurally stable and require special opinion dynamic gains,
we modify linear opinion dynamics~\eqref{eq:lin_opn_dyn_orig} and propose the Game-induced Nonlinear Opinion Dynamics (GiNOD):
\begin{equation}
\label{eq:GINOD}
\begin{aligned}
    &\ddopnstate^\iagent =\opinionDyn_\opnstate^\iagent(\dvopnstate, \attstate^\iagent, \state) := -D^\iagent \dopnstate^\iagent + \attstate^\iagent \begin{bmatrix}
        \opinionDyn^\iagent_\ell(\dvopnstate, \state)
    \end{bmatrix}_{\opnidx \in \ordersetthetai}  \\
    &\opinionDyn^\iagent_\opnidx(\dvopnstate, \state) := \saturation_1 \left(\alpha^\iagent_\opnidx(\state) \dopnstate^\iagent_\opnidx
    + \textstyle \sum_{\iagentaux \in \ordersetagent \setminus \{\iagent\}} \gamma^{\iagent \iagentaux}_\opnidx(\state) \dopnstate^\iagentaux_\opnidx \right)
    + \\
    &\quad \textstyle\sum_{\opnidxaux \in \ordersetthetai \setminus \{\opnidx\}} \saturation_2 \left( \beta^\iagent_{\opnidx \opnidxaux}(\state) \dopnstate^\iagent_\opnidxaux
    + \sum_{\iagentaux \in \ordersetagent \setminus \{\iagent\}} \eta^{\iagent \iagentaux}_{\opnidx \opnidxaux}(\state) \dopnstate^\iagentaux_\opnidxaux \right),
\end{aligned}
\end{equation}
for all player $\iagent \in \ordersetagent$ and parameters $\theta^\iagent_\opnidx,~\opnidx \in \ordersetthetai$, where $D^\iagent = \operatorname{diag}(\damping^\iagent_1,\ldots,\damping^\iagent_{\nthetai}) \in \reals^{\nthetai \times \nthetai}$ is a symmetric positive definite matrix that describes damping of the opinion states, $\attstate^i \in \reals$ is the attention.
The attention $\attstate^i$ can be interpreted as a scaling factor multiplying the saturated gradient flow dynamics $\opinionDyn^i_\ell(\cdot)$. A greater $\attstate^i$ promotes opinion formation ($\opnstate^i$ departing the origin), and a smaller $\attstate^i$ discourages opinion formation ($\opnstate^i$ approaching the origin).
Note that the full opinion states are recovered by $\vopnstate = \dvopnstate + \bar{\vopnstate}$, where the nominal opinion states $\bar{\vopnstate}$ are recursively updated using an iterative algorithm (Alg.~\ref{alg:opn_game}) to be introduced in the next section.

\subsection{When to Pay Attention in Games: the Price of Indecision}
We now introduce for each player~$i$ a measure called the \emph{Price of Indecision} (PoI) defined as
\begin{equation}
\label{eq:PoI}
\begin{aligned}
    &\poi^i(\vopnstate, \state) := \\
    &\max_{\ell_j \in \mathcal{I}_{\ell_j},~\forall j \neq i} \frac{\sum_{\ell_i \in \ordersetthetai} \softmax_{\ell_i}(\opnstate^i) \valfunc^i(\state; \theta^1_{\ell_1}, \ldots, \theta^{\nagents}_{\ell_{\nagents}}) }{ \min_{p_i \in \ordersetthetai} \valfunc^i(\state; \theta^1_{\ell_1}, \ldots, \theta^i_{p_i}, \ldots, \theta^{\nagents}_{\ell_{\nagents}}) }.
\end{aligned}
\end{equation}
The PoI, inspired by the Price of Anarchy~\cite{koutsoupias2009worst}, is a ratio lower bounded by $1$ that measures how player~$i$'s efficiency degrades due to indecision.
For a set of worst-case parameters selected by opponent players $j \neq i$, PoI will be large if the (expected) game value when player~$i$ chooses not to form an opinion (the numerator in~\eqref{eq:PoI}) outweighs the game value when player~$i$ declares an optimal opinion (the denominator in~\eqref{eq:PoI}).
Given the player's PoI, we introduce a state-and-opinion-dependent dynamic equation for evolving player~$i$'s attention~\cite{bizyaeva2022nonlinear}:
\begin{equation}
\label{eq:att_dyn}
    \dot{\attstate}^i = \attDyn^i(\attstate^i, \vopnstate, x) := - m^i \attstate^i + \rho^i\left( \poi^i(\vopnstate, x) - 1 \right),
\end{equation}
where $m^i > 0$ and $\rho^i > 0$ are damping and scaling parameters.
The PoI-based dynamics~\eqref{eq:att_dyn} increase the attention $\attstate^i$ when $\poi^i$ is large, which promotes opinion formation.
In other words, the agents are more inclined to form their opinions \emph{only if}  doing so increases their efficiency.


\section{Stability Analysis of Game-Induced NOD}
\label{sec:analysis} 

In this section, we derive precise stability conditions for GiNOD equilibria~\eqref{eq:GINOD} in a two-player, two-option setting, i.e. $\ordersetagent = \{1,2\}$, $\Theta_1 =\{\theta^1_1, \theta^1_2\}$, and $\Theta_2 =\{\theta^2_1, \theta^2_2\}$.
Note again that the elements in $\Theta_1$ and $\Theta_2$ need not coincide with each other.
Key properties\footnote{All informal conclusions listed here are subject to additional technical assumptions, which can be found in the Theorems and Corollary.} of GiNOD discovered from our analysis are summarized below:
\begin{enumerate}
    \item The neutral opinion for both players is unstable for an \emph{arbitrarily} small damping parameter. That is, when $\bvopnstate = 0$,  $\dvopnstate = 0$ is an unstable equilibrium of GiNOD and thus indecision is easily broken (Theorem~\ref{thm:destab}),
    \item When both agents have formed their non-neutral opinions,  $\dvopnstate = 0$ is a locally exponentially stable equilibrium for GiNOD for an \emph{arbitrarily} small damping parameter, if both agents' opinions correspond to a lower (i.e. better) game value from the current physical state (Theorem~\ref{thm:opn_game}),
    \item If for both agents there is no difference between game values of distinct options, then the opinions are driven purely by the damping terms (Corollary~\ref{coro:opn_game}).
\end{enumerate}
We start by reviewing two useful linear algebra lemmas.

\begin{lemma}[Theorem 4.2.12 in~\cite{horn1994topics}]
\label{lem:spec_kron}
Let $A \in \reals^{m \times m}$ and $B \in \reals^{n \times n}$ with $\lambda \in \spec(A)$ and  $\mu \in \spec(B)$, where $\spec(\cdot)$ denotes spectrum. Then $\lambda \mu$ is an eigenvalue of $A \otimes B$,  where $\otimes$ is the Kronecker product.
Any eigenvalue of $A \otimes B$ arises as such a product of eigenvalues of $A$ and $B$.
\end{lemma}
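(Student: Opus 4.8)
The plan is to prove the two halves of the statement separately, using the mixed-product property of the Kronecker product, $(A \otimes B)(C \otimes D) = (AC) \otimes (BD)$, as the main tool. For the first claim, take an eigenvector $v$ of $A$ with $Av = \lambda v$ and a vector $w$ with $Bw = \mu w$. Then $(A \otimes B)(v \otimes w) = (Av) \otimes (Bw) = (\lambda v) \otimes (\mu w) = \lambda \mu \, (v \otimes w)$, and since $v \otimes w \neq 0$ whenever $v, w \neq 0$, this exhibits $\lambda \mu$ as an eigenvalue of $A \otimes B$. That direction is essentially a one-line computation.

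The converse — that every one of the $mn$ eigenvalues of $A \otimes B$ has the form $\lambda \mu$ for some $\lambda \in \spec(A)$, $\mu \in \spec(B)$ — is where the real work lies, since $A$ and $B$ need not be diagonalizable and so one cannot simply count eigenvectors of the product form. The approach I would take is Schur triangularization: write $A = U_1 T_1 U_1^{*}$ and $B = U_2 T_2 U_2^{*}$ with $U_1, U_2$ unitary and $T_1, T_2$ upper triangular, the diagonal entries of $T_1$ being the eigenvalues $\lambda_1, \ldots, \lambda_m$ of $A$ and those of $T_2$ being $\mu_1, \ldots, \mu_n$ of $B$. Applying the mixed-product property twice gives $A \otimes B = (U_1 \otimes U_2)(T_1 \otimes T_2)(U_1 \otimes U_2)^{*}$, and $U_1 \otimes U_2$ is again unitary (its conjugate transpose is $U_1^{*} \otimes U_2^{*}$ and the product telescopes to $I_{mn}$), so $A \otimes B$ is similar to $T_1 \otimes T_2$. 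It then suffices to observe that $T_1 \otimes T_2$, viewed as an $m \times m$ array of $n \times n$ blocks whose $(i,j)$ block is $(T_1)_{ij} T_2$, is block upper triangular with each diagonal block $(T_1)_{ii} T_2 = \lambda_i T_2$ itself upper triangular; hence $T_1 \otimes T_2$ is upper triangular with diagonal entries exactly the $mn$ products $\lambda_i \mu_j$. Because similar matrices share a characteristic polynomial and the eigenvalues of a triangular matrix are its diagonal entries, the spectrum of $A \otimes B$, counted with multiplicity, is precisely $\{\lambda_i \mu_j : 1 \le i \le m,\ 1 \le j \le n\}$, which is the claim.

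The main obstacle is confined to the converse and is exactly the non-diagonalizable case; the Schur argument above handles it cleanly. An alternative I would keep in reserve is a density/continuity argument — diagonalizable matrices are dense, for diagonalizable $A, B$ the vectors $v_i \otimes w_j$ form an eigenbasis of $A \otimes B$ with eigenvalues $\lambda_i \mu_j$, and the coefficients of the characteristic polynomial of $A \otimes B$ are polynomial, hence continuous, in the entries of $A$ and $B$ — but matching up multiplicities in the limit takes extra care, so I would favor the Schur proof. Finally, since this lemma is quoted verbatim as Theorem 4.2.12 of \cite{horn1994topics}, the reference may simply be cited, with the short self-contained proof above included for completeness.
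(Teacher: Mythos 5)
Your proof is correct: the mixed-product computation for the forward direction and the Schur-triangularization argument for the converse are both sound, and together they establish exactly the stated spectrum of $A \otimes B$. The paper itself offers no proof of this lemma --- it is quoted directly as Theorem 4.2.12 of \cite{horn1994topics} and used as a black box --- so there is nothing to compare against beyond noting that your self-contained argument is the standard one and is perfectly adequate.
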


\begin{lemma}
\label{lem:eval_diag}
Let $D = d I_{n}$ where $d \in \reals$ and $I_n$ is the identity matrix in $\reals^{n \times n}$.
If $\lambda$ is an eigenvalue of $H \in \reals^{n \times n}$, then $d+c\lambda$ is an eigenvalue of $D + cH$ where $c \in \reals$.
\end{lemma}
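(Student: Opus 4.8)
The plan is to argue directly from an eigenvector of $H$, exploiting the fact that $D = dI_n$ acts as a scalar on every vector. Concretely, I would start by letting $\lambda \in \spec(H)$ and picking a nonzero $v \in \reals^n$ (or $\mathbb{C}^n$, since $H$ need not be symmetric) with $Hv = \lambda v$. Then I would compute
\begin{equation*}
    (D + cH)v = (dI_n + cH)v = d\,v + c\,\lambda v = (d + c\lambda)\,v,
\end{equation*}
and since $v \neq 0$, this exhibits $d + c\lambda$ as an eigenvalue of $D + cH$ with the same eigenvector $v$. That is essentially the entire argument.

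The one conceptual point worth stating explicitly is \emph{why} this works so cleanly: because $D$ is a scalar multiple of the identity, it commutes with $H$ and, more to the point, leaves every eigenvector of $H$ invariant as an eigenvector (with eigenvalue $d$). Hence eigenvectors of $H$ are automatically eigenvectors of any affine combination $D + cH$, and the eigenvalues transform by the corresponding affine map $\lambda \mapsto d + c\lambda$. I would phrase the proof to make clear that no diagonalizability or symmetry hypothesis on $H$ is needed — a single eigenpair suffices for the one-directional claim actually used in the sequel.

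If I wanted to record slightly more (though it is not required by the statement), I would remark that when $c \neq 0$ the map $\lambda \mapsto d + c\lambda$ is a bijection on $\mathbb{C}$, so in fact $\spec(D + cH) = \{\,d + c\lambda : \lambda \in \spec(H)\,\}$ as multisets, by applying the same computation to $H = c^{-1}\big((D+cH) - D\big)$. This parallels the structure of Lemma~\ref{lem:spec_kron}, but for the downstream stability analysis only the forward inclusion is needed, so I would keep the proof to the two-line eigenvector computation above.

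There is no real obstacle here; the ``hard part,'' such as it is, is simply resisting the temptation to invoke heavier machinery (Schur decomposition, spectral mapping theorem) when the elementary eigenvector manipulation is cleaner and self-contained.
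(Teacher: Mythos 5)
Your proposal is correct and is essentially identical to the paper's own proof: both take an eigenpair $(\lambda, v)$ of $H$ and compute $(D+cH)v = (d+c\lambda)v$ directly. The additional remarks on commutativity and the full spectral correspondence are fine but not needed.
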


\begin{proof}
Since $\lambda \in \spec(H)$, we have that $Hv = \lambda v$ where $v$ is the eigenvector associated with $\lambda$.
It follows that $(D + cH)v = dIv + cHv = (d+c\lambda)v$.
\end{proof}

In the two-player, two-option case, it is possible to derive and analyze each entry of the system matrix in opinion dynamics~\eqref{eq:lin_opn_dyn_orig}, as shown in  Lemma~\ref{lem:H}.
To ease the notation we denote value function $\valfunc^i_{\ell p} := \valfunc^i(\state;\theta^1_\ell,\theta^2_p)$.

\begin{lemma}
\label{lem:H}
Let value function $\hat{\valfunc}^{i}(\vopnstate, \state)$ be defined in~\eqref{eq:approx_vfunc_opn}.
Matrix $\mathbf{H}(x) := -\left.\left[\jacobian_{\vopnstate} \grad_{\opnstate^i} \hat{\valfunc}^{i}(\vopnstate, \state)\right]_{i \in \ordersetagent}\right|_{\bvopnstate} = \Gamma \otimes H$ with $H := \begin{bmatrix} 1 &-1 \\ -1 &1 \end{bmatrix}$ and $\Gamma := \begin{bmatrix} a_1 &b_1  \\ b_2 &a_2 \\ \end{bmatrix}$ where
\begin{align*}
    &a_1 := \phi_a(\bopnstate^1) \left[ \softmax_1(\bopnstate^2) \left( \valfunc^1_{11} - \valfunc^1_{21} \right) + \softmax_2(\bopnstate^2) \left( \valfunc^1_{12} - \valfunc^1_{22} \right) \right] \\
    &a_2 := \phi_a(\bopnstate^2) \left[ \softmax_1(\bopnstate^1) \left( \valfunc^2_{11} - \valfunc^2_{12} \right) + \softmax_2(\bopnstate^1) \left( \valfunc^2_{21} - \valfunc^2_{22} \right) \right] \\
    &b_i := \phi_b(\bopnstate^1) \phi_b(\bopnstate^2) \left(-\valfunc^i_{11}-\valfunc^i_{22}+\valfunc^i_{12}+\valfunc^i_{21}\right) \\
    &\phi_a(\bopnstate^i) := \left(\softmax_1(\bopnstate^i) - \softmax_2(\bopnstate^i)\right) \phi_b(\bopnstate^i)\\
    &\phi_b(\bopnstate^i) := \softmax_1(\bopnstate^i) \softmax_2(\bopnstate^i)
\end{align*}
for $i \in \{1,2\}$, and $\jacobian_{\vopnstate}(\cdot)$ denotes the Jacobian matrix with respect to $\vopnstate$.
Furthermore, $\spec(\mathbf{H}(x)) = \{0, 0, a_1 + a_2 \pm \left((a_1-a_2)^2 + 4 b_1 b_2\right)^{1/2}\}$.
\end{lemma}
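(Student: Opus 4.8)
The plan is to compute the block structure of $\mathbf{H}(x)$ directly from the definition~\eqref{eq:approx_vfunc_opn} in the $\nagents=2$, $N_{\theta^1}=N_{\theta^2}=2$ case, exploiting the very simple form the softmax Jacobian takes when there are only two options, and then to read off $\spec(\mathbf{H}(x))$ from Lemma~\ref{lem:spec_kron}. Write $v := (1,-1)^\top$, so that $H = vv^\top$ and $\spec(H)=\{0,2\}$ ($v$ is the eigenvector for $2$ and $(1,1)^\top$ for $0$). The only facts about the softmax I will need are, for each $i$: $\grad_{\opnstate^i}\softmax_1(\opnstate^i) = \phi_b(\opnstate^i)\,v$ and $\grad_{\opnstate^i}\softmax_2(\opnstate^i) = -\phi_b(\opnstate^i)\,v$ (from the standard softmax Jacobian in dimension two, since $\softmax_1\softmax_2 = \phi_b$ and $\softmax_1+\softmax_2=1$), together with $\grad_{\opnstate^i}\phi_b(\opnstate^i) = -\phi_a(\opnstate^i)\,v$ (product rule plus the same two identities).

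First I would expand $\hat{\valfunc}^1(\vopnstate,\state) = \sum_{\ell,p}\softmax_\ell(\opnstate^1)\softmax_p(\opnstate^2)\valfunc^1_{\ell p}$ and collect the $\softmax_\ell(\opnstate^1)$ terms, which gives $\grad_{\opnstate^1}\hat{\valfunc}^1 = \phi_b(\opnstate^1)\,\Delta^1(\opnstate^2)\,v$ with $\Delta^1(\opnstate^2):=\softmax_1(\opnstate^2)(\valfunc^1_{11}-\valfunc^1_{21})+\softmax_2(\opnstate^2)(\valfunc^1_{12}-\valfunc^1_{22})$; the important structural point is that this gradient is always a scalar multiple of the \emph{fixed} vector $v$. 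Differentiating again: with respect to $\opnstate^1$ only $\phi_b(\opnstate^1)$ moves, so $\jacobian_{\opnstate^1}\grad_{\opnstate^1}\hat{\valfunc}^1 = -\phi_a(\opnstate^1)\,\Delta^1(\opnstate^2)\,vv^\top = -\phi_a(\opnstate^1)\Delta^1(\opnstate^2)\,H$; with respect to $\opnstate^2$ only $\Delta^1$ moves, and $\grad_{\opnstate^2}\Delta^1(\opnstate^2) = \phi_b(\opnstate^2)(\valfunc^1_{11}+\valfunc^1_{22}-\valfunc^1_{12}-\valfunc^1_{21})\,v$, so $\jacobian_{\opnstate^2}\grad_{\opnstate^1}\hat{\valfunc}^1 = \phi_b(\opnstate^1)\phi_b(\opnstate^2)(\valfunc^1_{11}+\valfunc^1_{22}-\valfunc^1_{12}-\valfunc^1_{21})\,H$. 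Evaluating at $\bvopnstate$ and negating, the first block-row of $\mathbf{H}(x)$ is $[\,a_1 H \;\; b_1 H\,]$ with $a_1 = \phi_a(\bopnstate^1)\Delta^1(\bopnstate^2)$ and $b_1 = \phi_b(\bopnstate^1)\phi_b(\bopnstate^2)(-\valfunc^1_{11}-\valfunc^1_{22}+\valfunc^1_{12}+\valfunc^1_{21})$ --- exactly the expressions in the statement. The second block-row $[\,b_2 H \;\; a_2 H\,]$ follows by the identical computation with players $1$ and $2$ swapped. Hence $\mathbf{H}(x) = \Gamma\otimes H$.

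For the spectrum I would compute the characteristic polynomial of $\Gamma$, namely $\lambda^2 - (a_1+a_2)\lambda + (a_1 a_2 - b_1 b_2)$, whose roots are $\tfrac{1}{2}\big(a_1+a_2\pm\sqrt{(a_1-a_2)^2+4b_1b_2}\big)$. By Lemma~\ref{lem:spec_kron} every eigenvalue of $\Gamma\otimes H$ is a product of an eigenvalue of $\Gamma$ and an eigenvalue of $H$, and conversely all four such products occur; since the eigenvalues of $H$ are $0$ and $2$, this yields $\spec(\mathbf{H}(x)) = \{0,\,0,\,a_1+a_2\pm\sqrt{(a_1-a_2)^2+4b_1b_2}\}$, as claimed. (Lemma~\ref{lem:eval_diag} is not needed for this lemma; it will come in when combining $\mathbf{H}(x)$ with the damping term $D$ in the stability theorems.)

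The main obstacle is purely the bookkeeping in the middle paragraph: getting the signs right in the softmax identities and in $\grad_{\opnstate^i}\phi_b = -\phi_a v$, observing that differentiating the inner sum of $\grad_{\opnstate^1}\hat{\valfunc}^1$ in $\opnstate^2$ collapses to the second difference $\valfunc^1_{11}+\valfunc^1_{22}-\valfunc^1_{12}-\valfunc^1_{21}$ (so the cross-player block acquires the product factor $\phi_b(\bopnstate^1)\phi_b(\bopnstate^2)$ while the same-player block acquires $\phi_a(\bopnstate^1)$), and noting that because both relevant gradients are aligned with $v$, every second derivative is automatically a scalar times $vv^\top = H$, which is what forces the Kronecker structure. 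Once $\mathbf{H}(x)=\Gamma\otimes H$ is in hand, the spectral claim is a one-line consequence of Lemma~\ref{lem:spec_kron}.
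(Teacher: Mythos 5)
Your proposal is correct and follows essentially the same route as the paper's proof, which simply states that the entries of $\mathbf{H}(x)$ are obtained by computing the Jacobians of $\grad_{\opnstate^i}\hat{\valfunc}^i$ and that the spectrum follows from Lemma~\ref{lem:spec_kron} with $\spec(H)=\{0,2\}$. You have merely filled in the computational details (the rank-one structure $\grad_{\opnstate^i}\hat{\valfunc}^i \propto v$ forcing the Kronecker form, and the signs in the softmax identities) that the paper leaves implicit, and your observation that Lemma~\ref{lem:eval_diag} is not needed here is also accurate.
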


\begin{proof}
    Entries of $\mathbf{H}(x)$ are computed from the Jacobian matrices of $\grad_{\opnstate^i} \hat{\valfunc}^{i}(\vopnstate,\state)$ for $i \in \{1,2\}$.
    Spectrum of $\mathbf{H}(x)$ follows by Lemma~\ref{lem:spec_kron} since $\spec(\Gamma) =\{\frac{1}{2} [a_1 + a_2 \pm \left((a_1-a_2)^2 + 4 b_1 b_2\right)^{1/2}] \}$ and $\spec(H) = \{0,2\}$.
\end{proof}

It is easily observed that $\dvopnstate = 0$ is an equilibrium of~\eqref{eq:GINOD}. To facilitate the analysis of stability of $\dvopnstate = 0$, we consider GiNOD~\eqref{eq:GINOD} with a \emph{steady-state} attention $\attstate^1 = \attstate^2 \equiv \battstate := \lambda_\infty(\overline{\poi}) > 0$ under attention dynamics~\eqref{eq:att_dyn} with a fixed PoI, i.e. $\poi^1(\cdot) = \poi^2(\cdot) \equiv \overline{\poi} \geq 1$.
The steady-state attention $\battstate$ is guaranteed to exist in practice since $m^i > 0$ and thus $\lim_{t \rightarrow \infty} \attstate^i(t) < \infty$ for any $\overline{\poi} < \infty$.
We also assume the same damping term for both players' options, i.e., $D^1=D^2=D=\diag(\damping,\damping)$.
Define block diagonal matrix $\mathbf{D} = \blkdiag(D,D)$. 

\begin{theorem}[Instability at neutral opinion]
\label{thm:destab}
Let $\bvopnstate = 0$, i.e all agents hold a neutral opinion.
Then, $\dvopnstate = 0$ is an unstable equilibrium of GiNOD~\eqref{eq:GINOD}
if $\damping < 2\battstate \real(\sqrt{b_1b_2})$.
\end{theorem}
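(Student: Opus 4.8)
The plan is to apply Lyapunov's indirect (first) method at the equilibrium $\dvopnstate = 0$: I will compute the Jacobian of the GiNOD vector field there, show that the stated hypothesis forces one of its eigenvalues into the open right half-plane, and then invoke the standard instability theorem for linearized systems.

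First I would linearize the right-hand side of \eqref{eq:GINOD} about $\dvopnstate = 0$, keeping the steady-state attention $\attstate^1 = \attstate^2 \equiv \battstate$ and the common damping $D = \diag(\damping,\damping)$ fixed. Since $\saturation_r(0) = 0$ and $\saturation_r'(0) = 1$, the saturations contribute to the first-order term only through their unit slope at the origin, so the Jacobian at $\dvopnstate = 0$ of the map $\dvopnstate \mapsto \bigl[\opinionDyn^i_\ell(\dvopnstate,\state)\bigr]_{i,\ell}$ is precisely the system matrix of the linear opinion dynamics \eqref{eq:lin_opn_dyn_sub}, whose state-dependent gains \eqref{eq:dyn_param} assemble into the matrix $\mathbf{H}(\state)$ of Lemma~\ref{lem:H} evaluated at $\bvopnstate = 0$. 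Hence the Jacobian of \eqref{eq:GINOD} at $\dvopnstate = 0$ equals $-\mathbf{D} + \battstate\,\mathbf{H}(\state)$, and since $D = \damping I_2$ we have $\mathbf{D} = \blkdiag(D,D) = \damping I_4$.

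Next I would extract the spectrum. By Lemma~\ref{lem:H}, $\spec(\mathbf{H}(\state)) = \{0,\,0,\,a_1+a_2 \pm \sqrt{(a_1-a_2)^2 + 4b_1b_2}\}$; at $\bvopnstate = 0$ the softmax values satisfy $\softmax_1(\bopnstate^i) = \softmax_2(\bopnstate^i) = \tfrac12$, so $\phi_a(\bopnstate^i) = 0$ and consequently $a_1 = a_2 = 0$, leaving $\spec(\mathbf{H}(\state)) = \{0,\,0,\,2\sqrt{b_1b_2},\,-2\sqrt{b_1b_2}\}$. Applying Lemma~\ref{lem:eval_diag} to $-\damping I_4 + \battstate\,\mathbf{H}(\state)$ then yields $\spec\!\bigl(-\mathbf{D}+\battstate\mathbf{H}(\state)\bigr) = \{-\damping,\,-\damping,\,-\damping + 2\battstate\sqrt{b_1b_2},\,-\damping - 2\battstate\sqrt{b_1b_2}\}$.

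Finally I would examine the eigenvalue $\mu := -\damping + 2\battstate\sqrt{b_1b_2}$. Because $\damping>0$ and $\battstate>0$ are real, $\real(\mu) = -\damping + 2\battstate\,\real(\sqrt{b_1b_2})$, which is strictly positive precisely under the hypothesis $\damping < 2\battstate\,\real(\sqrt{b_1b_2})$. (The condition is non-vacuous only when $b_1b_2 > 0$: if $b_1b_2 \le 0$ then $\real(\sqrt{b_1b_2}) = 0$, the two nonzero eigenvalues of $\mathbf{H}(\state)$ form a purely imaginary pair, and the Jacobian is Hurwitz.) Since the GiNOD vector field is smooth and its linearization at $\dvopnstate = 0$ has an eigenvalue with positive real part, the instability theorem for linearizations (e.g.\ Khalil, \emph{Nonlinear Systems}, Thm.~4.7) gives that $\dvopnstate = 0$ is an unstable equilibrium of \eqref{eq:GINOD}. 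The only step requiring care is the first one — verifying that the saturations enter the linearization solely via $\saturation_r'(0)=1$ and that the gains \eqref{eq:dyn_param} reassemble exactly into $\mathbf{H}(\state) = \Gamma\otimes H$ of Lemma~\ref{lem:H}; once that is granted, the rest is a direct application of the two eigenvalue lemmas.
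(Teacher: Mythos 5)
Your proposal is correct and follows essentially the same route as the paper's proof: linearize GiNOD at $\dvopnstate = 0$ with steady-state attention to obtain the system matrix $-\mathbf{D} + \battstate\mathbf{H}$, observe that $\phi_a(\bopnstate^i)=0$ forces $a_1=a_2=0$ at the neutral opinion, compute the spectrum $\{-\damping,-\damping,-\damping\pm 2\battstate\sqrt{b_1b_2}\}$ via Lemmas~\ref{lem:eval_diag} and~\ref{lem:H}, and conclude instability from the eigenvalue with positive real part. Your added justification that the saturations enter the linearization only through $\saturation_r'(0)=1$ is a detail the paper leaves implicit, but it does not change the argument.
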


\begin{proof}
When $\bvopnstate = 0$, linearization of GiNOD at $\dvopnstate = 0$ gives linear system $\ddvopnstate = (-\mathbf{D} + \battstate \mathbf{H}) \dvopnstate$, where $a_1 = a_2 = 0$ in $\mathbf{H}$ since $\phi_a(\bopnstate^1) = \phi_a(\bopnstate^2) = 0$.
By Lemma~\ref{lem:eval_diag} and~\ref{lem:H}, we have $\spec(-\mathbf{D} + \battstate \mathbf{H}) = \{-d, -d, -\damping \pm 2 \battstate\sqrt{b_1 b_2}\}$.
Thus, $\dvopnstate = 0$ is unstable if
$d < 2\battstate\real(\sqrt{b_1b_2})$.
\end{proof}

\begin{remark}[Instability and (dis)agreement]
\label{rmk:instab}
It is desirable that, at the neutral opinion, GiNOD can be made unstable even if damping $\damping$ is not close to zero, so that agents can quickly and reliably form a non-neutral opinion to break any deadlock.
This requires that $\sqrt{b_1b_2}$ has a non-zero real part, or $b_1b_2 > 0$.
Since $\phi_b(\bopnstate^1) \phi_b(\bopnstate^2) > 0$, $b_1b_2 > 0$ if and only if $\valfunc_b := \prod_{i \in \{1,2\}}\left(-\valfunc^i_{11}-\valfunc^i_{22}+\valfunc^i_{12}+\valfunc^i_{21}\right) > 0$.
This is true in two cases.
First, if both players find that agreeing to the same option would be costly (e.g. squeezing into the same toll station in the Running Example), i.e. $\valfunc^1_{\ell\ell}$ and $\valfunc^2_{pp}$ are large for some $\ell, p \in \{1,2\}$, then $\valfunc_b > 0$ as a result of multiplying two negative numbers.
Second, if the cost of disagreement is high for both players (typical case in cooperative settings), i.e. $\valfunc^1_{\ell \neg\ell}$ and $\valfunc^2_{p\neg p}$ are large for some $\ell, p \in \{1,2\}$, where $\neg\ell$ denotes the alternative option to $\ell$, then $\valfunc_b > 0$ since it is the product of two positive numbers.
\end{remark}

\begin{theorem}[Opinion reflects game value]
\label{thm:opn_game}
If for $\ell_1, \ell_2 \in \{1,2\}$ it holds that $\softmax_{\ell_1}(\bopnstate^1) > \softmax_{\neg\ell_1}(\bopnstate^1)$, $\softmax_{\ell_2}(\bopnstate^2) > \softmax_{\neg\ell_2}(\bopnstate^2)$, $\valfunc^1_{\ell_1\ell_2} < \valfunc^1_{\neg\ell_1\ell_2}$, $\valfunc^2_{\ell_1\ell_2} < \valfunc^2_{\ell_1\neg\ell_2}$, and $a_1a_2 > b_1b_2$, then  $\dvopnstate=0$ is a locally exponentially stable equilibrium of GiNOD with an arbitrarily small damping $d > 0$.
\end{theorem}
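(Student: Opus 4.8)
The plan is to apply Lyapunov's indirect method to the $\dvopnstate$-subsystem of GiNOD~\eqref{eq:GINOD} with the attention held at its steady-state value $\battstate$, so that it suffices to prove the linearization of~\eqref{eq:GINOD} at the equilibrium $\dvopnstate=0$ is Hurwitz. Because $\saturation_r(0)=0$ and $\saturation_r'(0)=1$, linearizing the saturated gradient-flow terms at the origin reproduces exactly the linear opinion dynamics~\eqref{eq:lin_opn_dyn_orig}; thus the linearized system is $\ddvopnstate=\big(-\mathbf{D}+\battstate\,\mathbf{H}(\state)\big)\dvopnstate$ with $\mathbf{H}(\state)=\Gamma\otimes H$ from Lemma~\ref{lem:H}, where $\Gamma=\begin{bmatrix}a_1 & b_1\\ b_2 & a_2\end{bmatrix}$.

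First I would pin down the spectrum. Lemma~\ref{lem:H} gives $\spec(\mathbf{H}(\state))=\{0,\,0,\,\mu_+,\,\mu_-\}$ with $\mu_\pm:=a_1+a_2\pm\big((a_1-a_2)^2+4b_1b_2\big)^{1/2}$, i.e.\ twice the eigenvalues of $\Gamma$. Since $\mathbf{D}=\blkdiag(D,D)$ with $D=\diag(\damping,\damping)$ equals $\damping I_4$, Lemma~\ref{lem:eval_diag} yields $\spec\big(-\mathbf{D}+\battstate\mathbf{H}(\state)\big)=\{-\damping,\,-\damping,\,-\damping+\battstate\mu_+,\,-\damping+\battstate\mu_-\}$. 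For this set to sit in the open left half-plane for every $\damping>0$ it is necessary and sufficient that $\real(\mu_\pm)\le 0$, i.e.\ that the real $2\times 2$ matrix $\Gamma$ has both eigenvalues in the closed left half-plane; by Routh--Hurwitz this amounts to $\operatorname{tr}\Gamma=a_1+a_2\le 0$ and $\det\Gamma=a_1a_2-b_1b_2\ge 0$. The determinant condition is precisely the hypothesis $a_1a_2>b_1b_2$, so the theorem reduces to showing $a_1+a_2<0$; I would in fact target the stronger $a_1<0$ and $a_2<0$, which together with $a_1a_2>b_1b_2$ make $\Gamma$ strictly Hurwitz and hence $-\mathbf{D}+\battstate\mathbf{H}(\state)$ strictly Hurwitz for all $\damping>0$.

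It remains to sign each $a_i$. Reading off Lemma~\ref{lem:H}, $a_i=\phi_a(\bopnstate^i)\,B_i$, where $\phi_b(\cdot)>0$ forces $\sgn\phi_a(\bopnstate^i)=\sgn\big(\softmax_1(\bopnstate^i)-\softmax_2(\bopnstate^i)\big)$, and $B_i$ is the $\softmax(\bopnstate^{-i})$-weighted difference between player~$i$'s game value under option $\ell_i$ and under $\neg\ell_i$. The softmax hypotheses $\softmax_{\ell_i}(\bopnstate^i)>\softmax_{\neg\ell_i}(\bopnstate^i)$ fix the sign of $\phi_a(\bopnstate^i)$ ($+$ when $\ell_i=1$, $-$ when $\ell_i=2$), while the value orderings $\valfunc^1_{\ell_1\ell_2}<\valfunc^1_{\neg\ell_1\ell_2}$ and $\valfunc^2_{\ell_1\ell_2}<\valfunc^2_{\ell_1\neg\ell_2}$ make the term of $B_i$ carried by the majority mass $\softmax_{\ell_{-i}}(\bopnstate^{-i})>\tfrac12$ have the opposite sign to $\phi_a(\bopnstate^i)$; working through the sign cases of $(\ell_1,\ell_2)$ then gives $a_i<0$.

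The step I expect to be the main obstacle is making this last deduction rigorous: $B_i$ also contains the term weighted by the minority mass $\softmax_{\neg\ell_{-i}}(\bopnstate^{-i})<\tfrac12$, whose game-value difference is not sign-controlled by the stated hypotheses, so one must argue that the majority-weighted term dominates — e.g.\ by additionally assuming (as is natural in the Running Example) that player~$i$'s preference between its two options does not flip with the opponent's choice, or by using that the PoI-driven attention keeps the opponent's opinion mass concentrated enough. Once $a_1<0$, $a_2<0$, and $a_1a_2>b_1b_2$ are in hand, $\Gamma$ is Hurwitz, $-\mathbf{D}+\battstate\mathbf{H}(\state)$ is Hurwitz for every $\damping>0$, and Lyapunov's indirect method yields local exponential stability of $\dvopnstate=0$; the remaining bookkeeping point — that the frozen-attention reduction is legitimate — holds because $m^i>0$ forces $\attstate^i\to\battstate<\infty$ under~\eqref{eq:att_dyn}, as noted just before Theorem~\ref{thm:destab}.
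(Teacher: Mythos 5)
Your proposal follows essentially the same route as the paper's proof: linearize GiNOD at $\dvopnstate=0$ using $\saturation_r(0)=0$, $\saturation_r'(0)=1$ to recover $-\mathbf{D}+\battstate\mathbf{H}$, read off the spectrum from Lemmas~\ref{lem:spec_kron}--\ref{lem:H}, argue $a_1,a_2<0$ from the softmax and value-ordering hypotheses, and then show that $a_1a_2>b_1b_2$ is exactly what makes $a_1+a_2+((a_1-a_2)^2+4b_1b_2)^{1/2}<0$; your Routh--Hurwitz phrasing via $\operatorname{tr}\Gamma$ and $\det\Gamma$ is an equivalent repackaging of the paper's direct eigenvalue computation. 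The one step you flag as the ``main obstacle'' --- that the bracket in $a_i$ also contains a minority-weighted term $\softmax_{\neg\ell_{-i}}(\bopnstate^{-i})\,(\valfunc^i_{\cdot}-\valfunc^i_{\cdot})$ whose sign the stated hypotheses do not control --- is a genuine observation, but it is not something the paper resolves either: the paper's proof simply asserts $a_1,a_2<0$ from the four listed inequalities, implicitly relying on the majority-weighted term dominating (the regime of Remark~\ref{rmk:opn_game}, where $\softmax_{\ell_i}(\bopnstate^i)\to 1$). So you are not missing a hidden trick; your write-up is, if anything, more candid about this point than the published argument, and the rest of your reasoning (frozen steady-state attention $\battstate$, Lyapunov's indirect method) matches the paper.
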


\begin{proof}
We follow the proof of Theorem~\ref{thm:destab} by examining the spectrum of the linearized system matrix $-\mathbf{D} + \battstate\mathbf{H}$.
From $\softmax_{\ell_1}(\bopnstate^1) > \softmax_{\neg\ell_1}(\bopnstate^1)$, $\softmax_{\ell_2}(\bopnstate^2) > \softmax_{\neg\ell_2}(\bopnstate^2)$, $\valfunc^1_{\ell_1\ell_2} < \valfunc^1_{\neg\ell_1\ell_2}$, and $\valfunc^2_{\ell_1\ell_2} < \valfunc^2_{\ell_1\neg\ell_2}$, we have that $a_1, a_2 < 0$.
$\dvopnstate = 0$ is locally exponentially stable with any $d > 0$ if and only if $\max(\spec(-\mathbf{D} + \battstate\mathbf{H})) = \max\{-d, -d + \battstate (a_1 + a_2 + ((a_1-a_2)^2 + 4 b_1 b_2)^{1/2})\} < 0$.
Since the steady-state attention $\battstate >0$, if $a_1 + a_2 + ((a_1-a_2)^2 + 4 b_1 b_2)^{1/2} < 0$, then $\max(\spec(-\mathbf{D} + \battstate\mathbf{H})) < 0$ with an arbitrarily small $d > 0$.
Solving the inequality gives $a_1a_2 > b_1b_2$.
\end{proof}

\begin{remark}[Interpreting inequality $a_1a_2 > b_1b_2$]
\label{rmk:opn_game}
We observe that $a_1a_2 > b_1b_2$ holds with mild assumptions on game values when both players \emph{have formed their opinions}, that is, $\softmax_{\ell_1}(\bopnstate^1)\rightarrow 1$ and $\softmax_{\ell_2}(\bopnstate^2)\rightarrow 1$ for some $\ell_1,\ell_2 \in \{1,2\}$, and players' opinions reflect their game values (conditions $\valfunc^1_{\ell_1\ell_2} < \valfunc^1_{\neg\ell_1\ell_2}$ and $\valfunc^2_{\ell_1\ell_2} < \valfunc^2_{\ell_1\neg\ell_2}$ in Theorem~\ref{thm:opn_game}).
Since $a_1, a_2 < 0$, $a_1a_2 > b_1b_2$ trivially holds if $b_1b_2 < 0$.

When $b_1b_2 > 0$, $a_1a_2 > b_1b_2$ holds if $a_1a_2 / b_1b_2 > 1$.
Without loss of generality, assume $\softmax_1(\bopnstate^i) \rightarrow 1$ and $\softmax_2(\bopnstate^i) \rightarrow 0$ for both players.
Since $\softmax_1(\bopnstate^i) \rightarrow 1$, the ratio $a_1a_2/b_1b_2 \rightarrow c_1 c_2 \valfunc^\prime$ where $c_1 := (\softmax_1(\bopnstate^1)-\softmax_2(\bopnstate^1)) / \softmax_2(\bopnstate^2) \gg 1$, $c_2 := (\softmax_1(\bopnstate^2)-\softmax_2(\bopnstate^2)) / \softmax_2(\bopnstate^1) \gg 1$ and 
$\valfunc^\prime := \valfunc_a / \valfunc_b$ where both the numerator $\valfunc_a := \left(\valfunc^1_{11}-\valfunc^1_{21}\right) \left(\valfunc^2_{11}-\valfunc^2_{12}\right)$ and the denominator $\valfunc_b := \prod_{i \in \{1,2\}} \left(-\valfunc^i_{11}-\valfunc^i_{22}+\valfunc^i_{12}+\valfunc^i_{21}\right)$ are positive.
Therefore, as long as the game values $\valfunc^i_{\ell_1 \ell_2}$ are such that $\valfunc^\prime$ is not too small (e.g., when those game values are roughly of the same magnitude), $a_1a_2 > b_1b_2$ holds.
\end{remark}

\begin{remark}[Guidelines for choosing damping]
Based on Theorem~\ref{thm:destab} and~\ref{thm:opn_game}, it is recommended to pick a damping term that satisfies $0 < \damping < 2\battstate\sqrt{b_1b_2}$ when $b_1b_2 > 0$.
In this way the equilibrium $\dvopnstate = 0$ of GiNOD is unstable at the neutral opinion,
while remaining locally exponentially stable when agents' opinions match their game value differences.
\end{remark}

\begin{corollary}[Identical game values cannot form opinion]
\label{coro:opn_game}
If both agents have the same game values for different options, i.e. $\valfunc^1_{11}=\valfunc^1_{21}$, $\valfunc^1_{12}=\valfunc^1_{22}$, $\valfunc^2_{11}=\valfunc^2_{12}$, $\valfunc^2_{21}=\valfunc^2_{22}$, then the opinions are purely driven by the damping term $\damping$, i.e., $\dvopnstate = 0$ is always locally exponentially stable.
\end{corollary}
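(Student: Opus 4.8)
The plan is to show that the four hypotheses force every entry of the matrix $\Gamma$ from Lemma~\ref{lem:H} to vanish, so that the linearized system matrix of GiNOD collapses to pure damping and, moreover, the full nonlinear GiNOD map does too. First I would substitute the equalities directly into the formulas of Lemma~\ref{lem:H}. In $a_1 = \phi_a(\bopnstate^1)\bigl[\softmax_1(\bopnstate^2)(\valfunc^1_{11}-\valfunc^1_{21}) + \softmax_2(\bopnstate^2)(\valfunc^1_{12}-\valfunc^1_{22})\bigr]$, the hypotheses $\valfunc^1_{11}=\valfunc^1_{21}$ and $\valfunc^1_{12}=\valfunc^1_{22}$ kill both differences, so $a_1 = 0$; symmetrically $\valfunc^2_{11}=\valfunc^2_{12}$ and $\valfunc^2_{21}=\valfunc^2_{22}$ give $a_2 = 0$. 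For $b_1$, the combination $-\valfunc^1_{11}-\valfunc^1_{22}+\valfunc^1_{12}+\valfunc^1_{21}$ becomes $-\valfunc^1_{21}-\valfunc^1_{12}+\valfunc^1_{12}+\valfunc^1_{21}=0$ after the substitution, so $b_1 = 0$ irrespective of $\phi_b$; symmetrically $b_2 = 0$. Hence $\Gamma = 0$, and this holds for \emph{every} nominal opinion $\bvopnstate$, so $\mathbf{H}(x) = \Gamma \otimes H \equiv 0$.

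Next I would translate $\mathbf{H}(x) = 0$ into a statement about the state-dependent dynamics parameters. The $(i,j)$ block of $\mathbf{H}(x)$ equals the negated Hessian block $-\hessian^i_{ij}(\state)$ appearing in~\eqref{eq:lin_opn_dyn_orig}, so $\mathbf{H}(x) = 0$ forces $\hessian^i_{ij}(\state) = 0$ for all $i,j$, and therefore, via the definitions~\eqref{eq:dyn_param}, $\alpha^i_\ell(\state) = \gamma^{ij}_\ell(\state) = \beta^i_{\ell p}(\state) = \eta^{ij}_{\ell p}(\state) = 0$ for all indices. Plugging these into the GiNOD map~\eqref{eq:GINOD}, every argument of $\saturation_1$ and of $\saturation_2$ is identically zero, and since $\saturation_r(0)=0$ we obtain $\opinionDyn^i_\ell(\dvopnstate,\state) \equiv 0$. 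Thus~\eqref{eq:GINOD} reduces \emph{exactly} (not merely to first order) to $\ddopnstate^i = -D^i\dopnstate^i$ for each player $i$, i.e. the opinion trajectories are driven purely by the damping term $\damping$, as claimed.

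Finally, stability is immediate. Stacking the players, the closed system is the linear, decoupled $\ddvopnstate = -\mathbf{D}\dvopnstate$ with $\mathbf{D} = \blkdiag(D,D)$, $D = \diag(\damping,\damping)$ symmetric positive definite; hence $\spec(-\mathbf{D}) = \{-\damping,-\damping,-\damping,-\damping\}$, all strictly negative, so $\dvopnstate = 0$ is (globally, hence locally) exponentially stable for any $\damping > 0$. Equivalently, this is the degenerate case $a_1=a_2=b_1=b_2=0$ of the linearized spectrum $\{-\damping,-\damping,-\damping\pm 2\battstate\sqrt{b_1b_2}\}$ used in the proofs of Theorems~\ref{thm:destab} and~\ref{thm:opn_game}. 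There is no real obstacle here; the only points requiring care are observing that the four value equalities annihilate $\Gamma$ \emph{identically in} $\bvopnstate$, and that the saturation nonlinearities vanish together with their arguments, so the reduction to pure damping dynamics is exact rather than only linearized.
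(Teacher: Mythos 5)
Your proposal is correct and follows the same route as the paper's proof: apply Lemma~\ref{lem:H} to conclude $\mathbf{H}(x)=0$ under the stated value equalities, observe that GiNOD then collapses exactly to $\ddopnstate^\iagent = -D^\iagent\dopnstate^\iagent$, and read off exponential stability from the positive definite damping. You simply spell out the details the paper leaves implicit (the entrywise vanishing of $a_1,a_2,b_1,b_2$ and the fact that $\saturation_r(0)=0$ makes the reduction exact rather than merely to first order).
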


\begin{proof}
By Lemma~\ref{lem:H} we have in this case $\mathbf{H} = 0$.
GiNOD~\eqref{eq:GINOD} becomes $\ddopnstate^\iagent = -D^\iagent \dopnstate^\iagent$ for $i \in \{1,2\}$.
\end{proof}

\section{Emergent Coordination Planning using Game-Induced Nonlinear Opinion Dynamics}
\label{sec:QMDP}

In this paper, we seek to combine the best attributes from differential games and game-induced nonlinear opinion dynamics towards efficient, deadlock-free multi-agent emergent coordination, in which all players are initially undecided about their parameters of the game.
To this end, we first formulate an \emph{opinion-weighted} QMDP based on the up-to-date opinions evolved with GiNOD in Sec.~\ref{sec:QMDP:L0}.
Then we modify and extend the QMDP formulation using ideas from cognitive hierarchy to enable \emph{active} opinion manipulation in Sec.~\ref{sec:QMDP:L1}.
The overall framework of our approach applied in receding horizon fashion is summarized in Algorithm~\ref{alg:opn_game}.

\subsection{Opinion-Weighted QMDP}
\label{sec:QMDP:L0}

Given a physical state $\state$ and players' opinions $\vopnstate$, we can formulate the opinion-weighted QMDP by combining QMDP~\eqref{eq:approx_vfunc_prob} and opinion-weighted game value function~\eqref{eq:approx_vfunc_opn}.
player~$i$'s strategy is given by:
\begin{equation}
\label{eq:QMDP_L0}
\begin{aligned}
    &\policy^i_{\text{L0}}(\state, \vopnstate) := \textstyle\argmin_{\ctrl^i \in \cset^i} c_I^i(\state, \ctrl^i) + \hat{\valfunc}^{i}(\vopnstate, \state^+) \\
    &\state^+ = \bar{\dyn} \left(\state, \policy^i_{\text{L0}}(\state, \vopnstate), \{\policy^j(\state; \theta^1, \ldots, \theta^{\nagents}) \}_{j \in \ordersetagent \setminus \{i\}} \right),
\end{aligned}
\end{equation}
where $c_{I}^\iagent\left(\cdot, \cdot\right)$ is the parameter-independent stage cost, $\hat{\valfunc}^{i}(\cdot)$ is the opinion-weighted game value function defined in~\eqref{eq:approx_vfunc_opn}, and $\policy^j(\state; \theta^1, \ldots, \theta^{\nagents})$ is player $j$'s equilibrium policy of the subgame parametrized by $(\theta^1, \ldots, \theta^{\nagents})$.
We refer to policy~\eqref{eq:QMDP_L0} as the \emph{Level-0} opinion-weighted QMDP (L0-QMDP) policy, whose namesake will become clearer in the next section as we introduce the Level-1 QMDP policy.
Aligned with the QMDP principle, an agent using the L0-QMDP policy first declares an action, then commits to an option and assumes that the parameter uncertainties of other agents disappear.
The following proposition shows that if the subgames are LQ games, then the L0-QMDP, under mild assumptions, can be cast as a convex quadratic program (QP), which can be solved efficiently via off-the-shelf solvers.

\begin{proposition}
\label{prop:QMDP}
If all subgames are LQ games, the (physical) state evolves under a control-affine dynamic model, i.e. $\state^+ = \bar{\dyn}(\state) + \sum_{\iagent \in \ordersetagent} \bar{g}^i(\state) \ctrl^{i}$, stage cost $c^i_I(\cdot,\cdot)$ is a convex quadratic function in $\ctrl^i$, and control set $\cset^i$ is convex, then QMDP~\eqref{eq:QMDP_L0} is a convex QP.
\end{proposition}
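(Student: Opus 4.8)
The plan is to show that, once the current physical state $\state$ and the opinion profile $\vopnstate$ are fixed, the objective in~\eqref{eq:QMDP_L0} is a convex quadratic function of the single decision variable $\ctrl^i$, and that the feasible set is convex, so~\eqref{eq:QMDP_L0} is a convex QP. First I would note that with $\vopnstate$ fixed every softmax weight $\softmax_{\ell_k}(\opnstate^k)$ appearing in~\eqref{eq:approx_vfunc_opn} is a constant in $(0,1)$, and since $\sum_{\ell_k}\softmax_{\ell_k}(\opnstate^k)=1$ for each $k$, the products $\prod_{k=1}^{\nagents}\softmax_{\ell_k}(\opnstate^k)$ are nonnegative and sum to one. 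Hence $\hat{\valfunc}^{i}(\vopnstate,\cdot)$ is a fixed \emph{convex combination} of the subgame value functions $\valfunc^i(\cdot\,;\theta^1_{\ell_1},\ldots,\theta^{\nagents}_{\ell_{\nagents}})$.

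Next I would make the dependence on $\ctrl^i$ explicit. Under the control-affine model $\state^+ = \bar{\dyn}(\state)+\sum_{k\in\ordersetagent}\bar{g}^k(\state)\ctrl^{k}$, and because the opponents' subgame policies $\policy^j(\state;\theta^1,\ldots,\theta^{\nagents})$ depend only on the (fixed) current state $\state$ and on the parameters --- never on $\ctrl^i$ --- the predicted next state associated with each parameter tuple, $\state^+_{\ell_1\cdots\ell_{\nagents}} = \bar{\dyn}(\state)+\bar{g}^i(\state)\ctrl^{i}+\sum_{j\neq i}\bar{g}^j(\state)\policy^j(\state;\theta^1_{\ell_1},\ldots,\theta^{\nagents}_{\ell_{\nagents}})$, is an \emph{affine} function of $\ctrl^i$. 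Since all subgames are LQ, each $\valfunc^i(\cdot\,;\theta^1_{\ell_1},\ldots)$ is the quadratic $\tfrac12\delta\state^\top \ilqrValQuad^i_{\ell_1\cdots\ell_{\nagents}}\delta\state + \delta\state^\top\ilqrValLinear^i_{\ell_1\cdots\ell_{\nagents}} + v^i$ recovered from the Riccati solution; precomposing a quadratic with an affine map gives a quadratic, so each term $\valfunc^i(\state^+_{\ell_1\cdots\ell_{\nagents}};\theta^1_{\ell_1},\ldots)$ is quadratic in $\ctrl^i$ with Hessian $\bar{g}^i(\state)^\top \ilqrValQuad^i_{\ell_1\cdots\ell_{\nagents}}\,\bar{g}^i(\state)$, and the nonnegative-weighted sum $\hat{\valfunc}^{i}$ is quadratic in $\ctrl^i$ as well.

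Then I would assemble the convexity argument. The term $c_I^i(\state,\ctrl^i)$ is convex quadratic in $\ctrl^i$ by hypothesis. For $\hat{\valfunc}^{i}$, its Hessian in $\ctrl^i$ is $\sum_{\ell_1,\ldots,\ell_{\nagents}}\big(\prod_{k=1}^{\nagents}\softmax_{\ell_k}(\opnstate^k)\big)\,\bar{g}^i(\state)^\top \ilqrValQuad^i_{\ell_1\cdots\ell_{\nagents}}\,\bar{g}^i(\state)$; invoking the mild assumption (alluded to before the proposition) that the subgame value Hessians are positive semidefinite, $\ilqrValQuad^i_{\ell_1\cdots\ell_{\nagents}}\succeq 0$, and using that the PSD cone is convex and closed under congruence $M\mapsto G^\top M G$ and under nonnegative combinations, this Hessian is PSD. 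Hence the full objective is a convex quadratic in $\ctrl^i$, which together with the convex constraint $\ctrl^i\in\cset^i$ makes~\eqref{eq:QMDP_L0} a convex QP (a standard QP when $\cset^i$ is, in addition, polyhedral).

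The main obstacle is precisely the curvature of the opinion-weighted value: unlike single-agent LQR, the value matrices $\ilqrValQuad^i$ of a coupled general-sum LQ game need not be PSD, so convexity genuinely rests on the assumption $\ilqrValQuad^i_{\ell_1\cdots\ell_{\nagents}}\succeq 0$ for every subgame --- I would state this explicitly and mention sufficient conditions (e.g. sufficiently benign cross-coupling in the stage costs) under which it is guaranteed. A minor bookkeeping point worth getting right is the parameter dependence of $\state^+$ through the opponents' subgame policies: the value being averaged is $\valfunc^i(\state^+_{\ell_1\cdots\ell_{\nagents}};\theta^1_{\ell_1},\ldots)$ rather than $\valfunc^i$ at a single $\state^+$, but since each $\state^+_{\ell_1\cdots\ell_{\nagents}}$ remains affine in $\ctrl^i$ this does not affect the argument.
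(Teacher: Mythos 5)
Your proposal is correct and follows essentially the same route as the paper's proof: substitute the control-affine dynamics into each subgame value so that every term $\valfunc^i(\state^+;\theta^1_{\ell_1},\ldots,\theta^{\nagents}_{\ell_{\nagents}})$ becomes a quadratic in $\ctrl^i$, and observe that the softmax-weighted sum plus the convex quadratic stage cost over the convex set $\cset^i$ yields a convex QP. The one place you go beyond the paper is in flagging that convexity of the value term requires the subgame value Hessians $\ilqrValQuad^i$ to be positive semidefinite (congruence and nonnegative combinations preserve the PSD cone, but general-sum LQ value matrices need not lie in it); the paper's proof simply asserts that the weighted sum is convex quadratic without justifying this step, so your explicit caveat is a correct sharpening rather than a deviation.
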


\begin{proof}
Each term $\valfunc^i(\state^+; \theta^1_{\ell_1}, \ldots, \theta^{\nagents}_{\ell_{\nagents}})$ in $\hat{\valfunc}^{i}(\cdot)$ can be expanded by plugging in the dynamic model as
$\valfunc^i(\cdot) = [(\ctrl^i - \bar{\ctrl}^{i})^{\top} g^i(\state)^{\top} \bar{\ilqrValQuad}^i + \bar{\ilqrValLinear}^{i,\top} ]g^i(\state)(\ctrl^i - \bar{\ctrl}^{i}) + \bar{C}$, which is a quadratic function in $\ctrl^i$, where $(\bar{\ctrl}^i,\bar{\ilqrValQuad}^i, \bar{\ilqrValLinear}^i)$ are shorthand notations for the value function parameters of the subgame defined by $(\theta^1_{\ell_1}, \ldots, \theta^{\nagents}_{\ell_{\nagents}})$, and $\bar{C} \in \reals$ is a constant term that does not depend on $\ctrl^i$.
Therefore, the overall cost of player~$i$ containing a weighted sum of $\valfunc^i(\cdot)$ terms and the stage cost is also a convex quadratic function in $\ctrl^i$.
\end{proof}

Despite its simplicity and efficient computation, we note that the L0-QMDP policy does not take into account opinion evolution as a result of agents' actions, thereby unable to actively steer the opinions.
In the next section, we modify the L0-QMDP to enable active manipulation of agents' opinions.

\begin{remark}
As an alternative to QMDP~\eqref{eq:QMDP_L0} where opponents are assumed as clairvoyant players, we may solve a QMDP \emph{Game}:
\begin{align*}
    &\policy^i(\state, \vopnstate) := \textstyle\argmin_{\ctrl^i \in \cset^i} c_I^i(\state, \ctrl^i) + \hat{\valfunc}^{i}(\vopnstate, \state^+) \\
    &\state^+ = \bar{\dyn} \left(\state, \policy^1(\state, \vopnstate), \ldots, \policy^\nagents(\state, \vopnstate) \right),
\end{align*}
in which the agents' QMDP problems are coupled.
If the conditions in Prop.~\ref{prop:QMDP} are satisfied and additionally $\cset^i = \reals^{n_{u_i}}$ holds, then it can be shown that the QMDP Game is an LQ Game, whose global feedback Nash equilibrium can be computed efficiently via coupled Riccati equations~\cite{bacsar1998dynamic}.
\end{remark}

\subsection{Actively Manipulating Opinions}
\label{sec:QMDP:L1}

We now introduce the \emph{Level-1} opinion-weighted QMDP (L1-QMDP) policy, which is inspired by the established work on cognitive hierarchy ($K$-level reasoning)~\cite{stahl1994experimental}.
The ego player using the L1-QMDP policy assumes that all opponents apply the L0-QMDP policy.
This way, the ego player can declare two actions sequentially in time - the first one evolves the current opinions forward in time through GiNOD, and all uncertainties disappear after the ego's second action is determined.
Given players' physical state $\state$, opinions $\vopnstate$, and attentions $\Lambda := (\attstate^1,\ldots,\attstate^\nagents)$, we can formulate the L1-QMDP planning problem as:
\begin{equation}
\label{eq:QMDP_L1}
\begin{aligned}
     \policy^i_{\text{L1}}(\state, \vopnstate, \Lambda&) = \ctrl^i_0(\state, \vopnstate, \Lambda) \\
     \min_{\ctrl_0^i, \ctrl_1^i \in \cset^i} &c_I^i(\state_0, \ctrl^i_0) + c_I^i(\state_1, \ctrl^i_1) + \hat{\valfunc}^{i}(\vopnstate_1, \state_2) \\
     \text{s.t.} \quad & \state_0 = \state,~\vopnstate_0 = \vopnstate,~\Lambda_0 = \Lambda\\
     &\state_1 = \bar{\dyn}(\state_0, \ctrl_0^i, \{\policy^j_{\text{L0}}(\state_0, \vopnstate_0)\}_{j \in \ordersetagent \setminus \{i\}}) \\
     &\state_2 = \bar{\dyn}(\state_1, \ctrl_1^i, \{\policy^j(\state_1; \theta^1, \ldots, \theta^{\nagents}) \}_{j \in \ordersetagent \setminus \{i\}} ) \\
     &\vopnstate_1 = \bar{\opinionDyn}_\opnstate(\vopnstate_0, \Lambda_0, \state_1(\ctrl^i_0)),
\end{aligned}
\end{equation}
where $\bar{\opinionDyn}_\opnstate(\cdot) := [\bar{\opinionDyn}_\opnstate^i(\cdot)]_{i \in \ordersetagent}$ is the discrete-time joint opinion dynamics given by concatenating players' time-discretized GiNOD $\opinionDyn_\opnstate^i(\cdot)$ defined in~\eqref{eq:GINOD}.
From~\eqref{eq:QMDP_L1} we can see that the control action of the ego agent~$i$ who players the L1-QMDP policy is optimized given the knowledge that it is able to affect opinion $\vopnstate_1$ through GiNOD $\bar{\opinionDyn}_\opnstate(\cdot)$.
Problem~\eqref{eq:QMDP_L1} is in general a non-convex trajectory optimization problem as it involves the nonlinear physical dynamics $\bar{\dyn}(\cdot)$ and GiNOD $\bar{\opinionDyn}_\opnstate(\cdot)$.
Nonetheless, since the dimension of decision variables (agent~$i$'s control) is oftentimes low, \eqref{eq:QMDP_L1} can be efficiently solved by gradient-based numerical solvers.

\begin{algorithm}[!hbtp]
	\caption{Receding Horizon QMDP using GiNOD}
	\label{alg:opn_game}
	\begin{algorithmic}[1]
	\Require Initial state $\state(0)$, opinions $\vopnstate(0)$, nominal opinions $\bvopnstate(0)$, attentions $\Lambda(0)$, horizon step $T$
    
    \State Initialize time step $t \gets 0$
    \For{$t = 0,1,\ldots,T-1$}
    	\LineComment{Solve subgames}
    	\State $\valfunc^i(\state(t); \theta^1, \ldots, \theta^{\nagents}) \gets$ Solve subgames using Subroutine $\mathcal{S}$ for all players $i \in \ordersetagent$ and all parameter combinations $\theta^1 \in \Theta^1,\ldots,\theta^{\nagents} \in \Theta^{\nagents}$
    
        \LineComment{Construct opinion dynamics}
        \State Set nominal opinions $\bvopnstate(t) \gets \vopnstate(t)$
        \State Construct GiNOD $\opinionDyn_\opnstate^i(\dvopnstate, \attstate^i, \state)$ in~\eqref{eq:GINOD} and attention dynamics $\attDyn^i(\attstate^i, \vopnstate, x)$ in~\eqref{eq:att_dyn} for all players $i \in \ordersetagent$
        
        \LineComment{Compute QMDP policies}
        \State $\ctrl^i(t) \gets$ Compute control action using the L0-QMDP policy $\policy^i_{\text{L0}}(\state(t), \vopnstate(t))$ in~\eqref{eq:QMDP_L0} or the L1-QMDP policy $\policy^i_{\text{L1}}(\state(t), \vopnstate(t), \Lambda(t))$ in~\eqref{eq:QMDP_L1} for each player $i \in \ordersetagent$
    
        \LineComment{Update state, opinions, and attentions}
        \State $\state(t+1) \gets$ Integrate $\dyn(\state(t), \csig(t))$
        \If{$t \geq 1$}
            \State $\dopnstate^i(t) \gets$ Integrate GiNOD $\opinionDyn_\opnstate^i(\dvopnstate(t-1), \attstate^i(t-1), \state(t))$ for all players $i \in \ordersetagent$
            \State $\attstate^i(t) \gets$ Integrate attention dynamics $\attDyn^i(\attstate^i(t-1), \vopnstate(t), x(t))$ for all players $i \in \ordersetagent$
            \State $\vopnstate(t) \gets \bvopnstate(t-1) + \dvopnstate(t)$
        \EndIf
    
    \EndFor
	\end{algorithmic}
\end{algorithm}

\section{Simulation Results}
\label{sec:sim}
We apply the receding-horizon opinion-weighted QMDP planning framework (Alg.~\ref{alg:opn_game}) to the toll station coordination task described in the Running Example.
Both vehicles $i \in \{1,2\}$ are described by a kinematic bicycle model~\cite{zhang2020optimization}, whose state is defined as $\state^i = (p_x^i, p_y^i, \varphi^i, v^i)$.
Here, $p_x^i$ and $p_y^i$ are the center position of car $i$'s rear axes, $\varphi^i$ is the yaw angle with respect to the $x$-axis, and $v^i$ is the velocity with respect to the rear axes.
The joint state vector is $x := (x^1, x^2) \in \reals^8$.
All continuous-time dynamics were discretized with a time step of $\Delta t = 0.2$ s using the forward Euler method.
We used a JAX~\cite{bradbury2020jax}-based implementation of the ILQ Game method~\cite{fridovich2020efficient} as the Subroutine $\mathcal{S}$ for solving the subgames.
The QMDP optimization problems were modeled and solved using CasADi~\cite{andersson2019casadi}.
The open-source code is available online.\footnote{\url{https://github.com/SafeRoboticsLab/opinion_game}}


\begin{figure}[!hbtp]
  \centering
  \includegraphics[width=1.0\columnwidth]{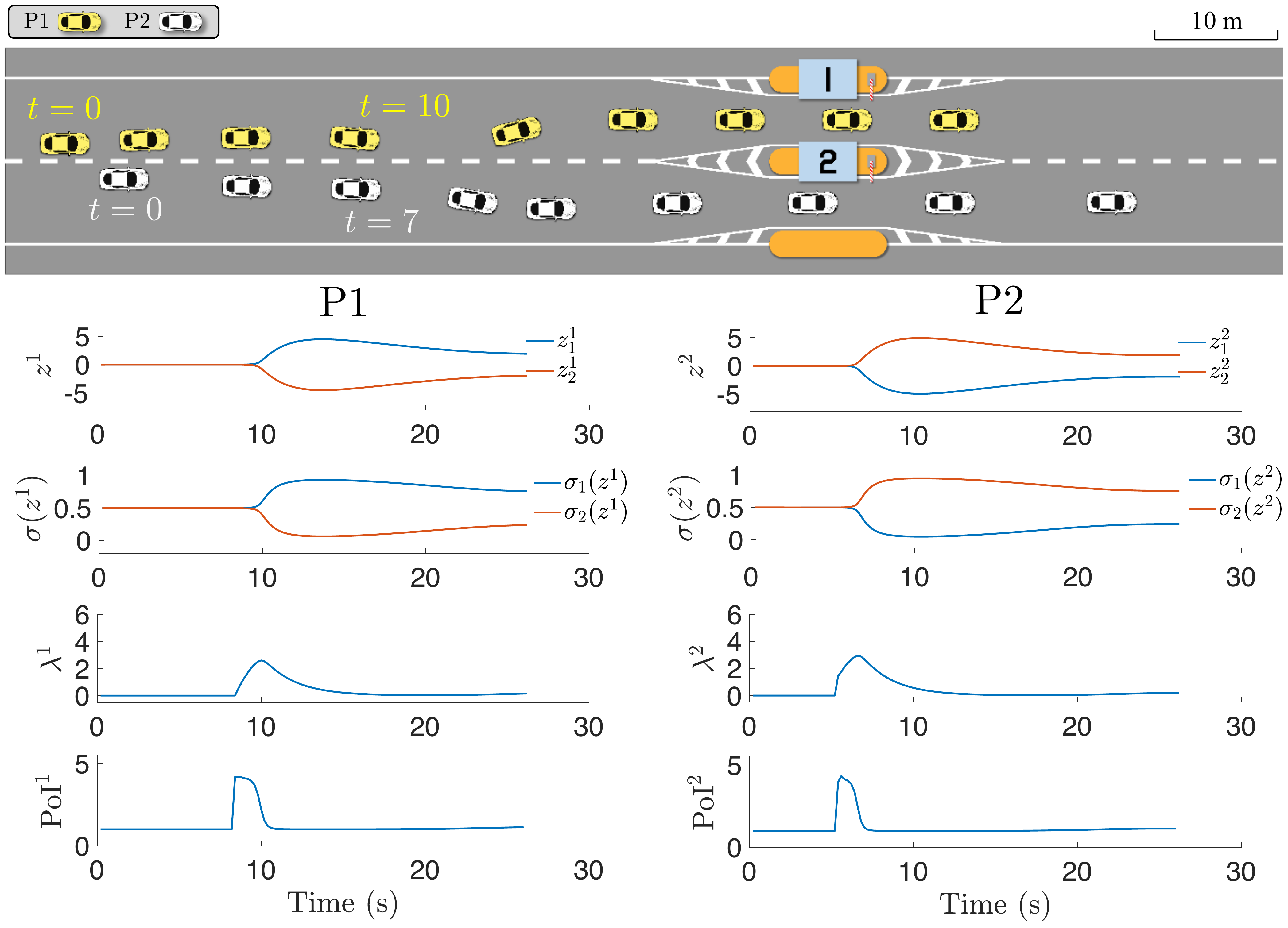}
  \caption{\label{fig:L0_t1} Agents' state, opinion (original and softmax), attention, and PoI trajectories using the L0-QMDP policy with homogeneous cost weights. Vehicle snapshots are plotted every 3 seconds.
  }
\end{figure}

The physical states of the vehicles were initialized to be $\state(0) = (0~\text{m}, 5~\text{m}, 0~\text{rad}, 3~\text{m/s}, 5~\text{m}, 2~\text{m}, 0~\text{rad}, 3~\text{m/s})$.
We modeled the \emph{initially undecided} agents by using an almost-neutral initial opinion $\opnstate^i_{\ell_i} = \epsilon$ for $i \in \{1,2\}$ and $\ell_i \in \{1,2\}$ in all simulations, where $\epsilon > 0$ is a small number that prevents opinions from staying at equilibrium $\dvopnstate = 0$ forever.
Recall that the parameter-dependent stage cost is defined as $c_{D}^\iagent\left(\state^i; \theta^\iagent\right) = -w^i_{\theta_i} \mathbf{1}[\state^i \in \mathcal{T}_{\theta^i}]$ where $w^i_{\theta^i} > 0$ is the cost weight that encodes agent~$i$'s degree of preference to go through a toll booth.

\noindent \textbf{Homogeneous cost weights.}
We first examine a case in which both vehicles had the same cost weight $w^1_1 = w^1_2 = w^2_1 = w^2_2 = 15$.
The resulting closed-loop state, opinion, attention, and PoI trajectories using the L0-QMDP policy are plotted in Fig.~\ref{fig:L0_t1}.
At the beginning of the simulation, agents' opinions were constantly neutral since the vehicles were farther away from the toll stations and the parameter-dependent cost $c_{D}^\iagent\left(\state^i; \theta^\iagent\right)$ evaluated to $0$ for both options, hence the subgame values were identical.
This validated Corollary~\ref{coro:opn_game}.
As $c_{D}^\iagent(\cdot)$ started to produce nonzero rewards for car 1 at around $t=10$ s and car 2 at around $t=7$ s, the attentions driven by the PoI spiked up, agents rapidly formed an opinion, and both cars safely passed through a toll station, which empirically verified Theorem~\ref{thm:destab}.

\noindent \textbf{Heterogeneous cost weights.}
Next, we consider a set of heterogeneous cost weights $w^1_1 = 40$, $w^1_2 = 50$, $w^2_1 = 50$, $w^2_2 = 40$, encoding that car 1 prefers to go through toll booth 2, and car 2 is more inclined to visit toll booth 1.
The state and opinion trajectories when both cars are using the \emph{L0-QMDP} policy are plotted in Fig.~\ref{fig:L0_t2}.
Due to the interference from car 2 (cutting in front of car 1), car 1 formed an opinion to stay in the left lane and went through the less preferred toll station.
In another trial under the same initial condition, we applied the \emph{L1-QMDP} policy to car 1 while keeping the L0-QMDP policy for car 2.
The resulting trajectories are shown in Fig.~\ref{fig:L1L0}.
By leveraging the active opinion manipulation feature of the L1-QMDP policy, car 1 was able to plan a more efficient trajectory towards its preferred toll booth.

\begin{figure}[!hbtp]
  \centering
  \includegraphics[width=1.0\columnwidth]{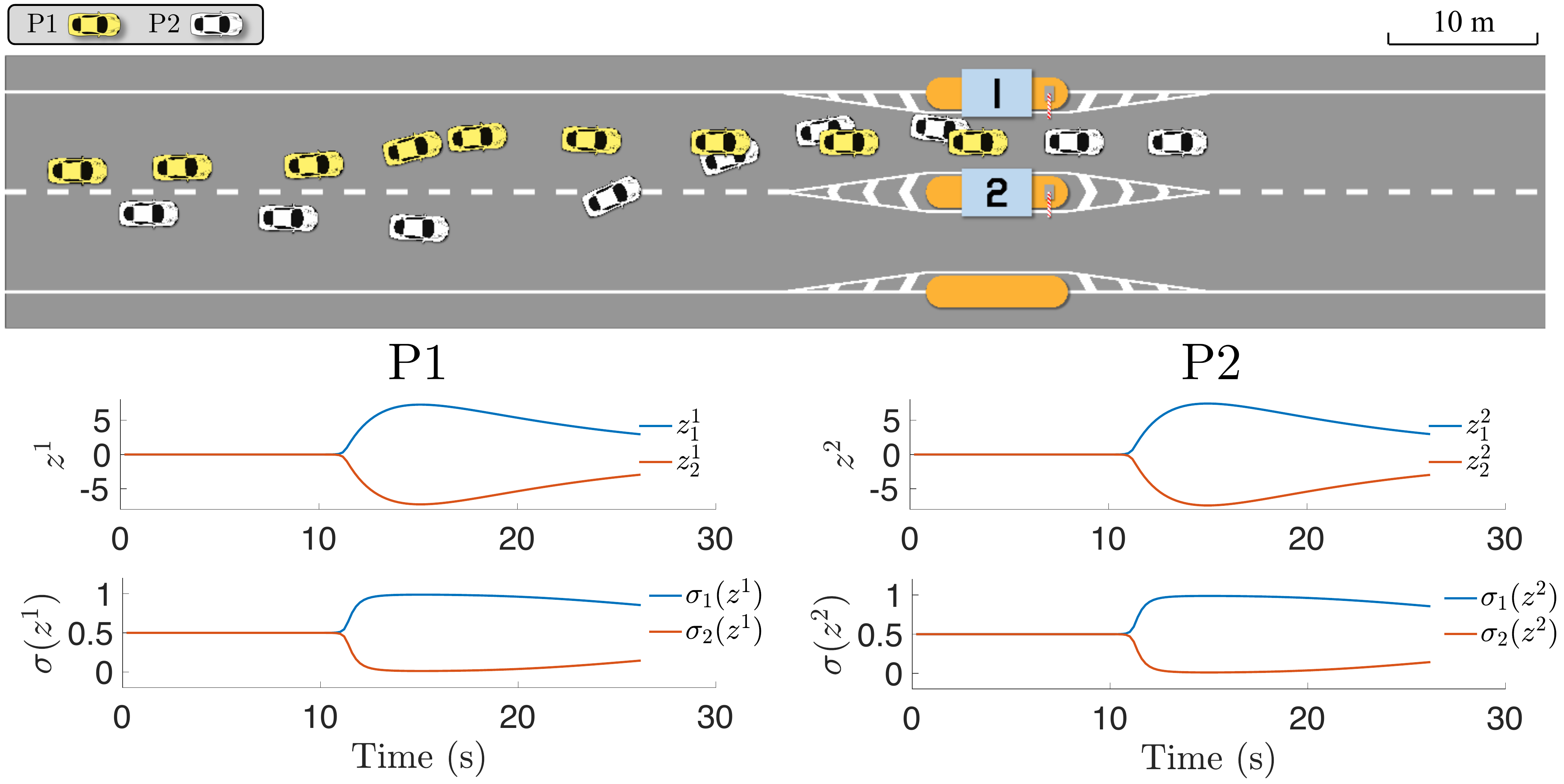}
  \caption{\label{fig:L0_t2} Agents' state and opinion (original and softmax) trajectories using the L0-QMDP policy with heterogeneous cost weights.
  }
\end{figure}

\begin{figure}[!hbtp]
  \centering
  \includegraphics[width=1.0\columnwidth]{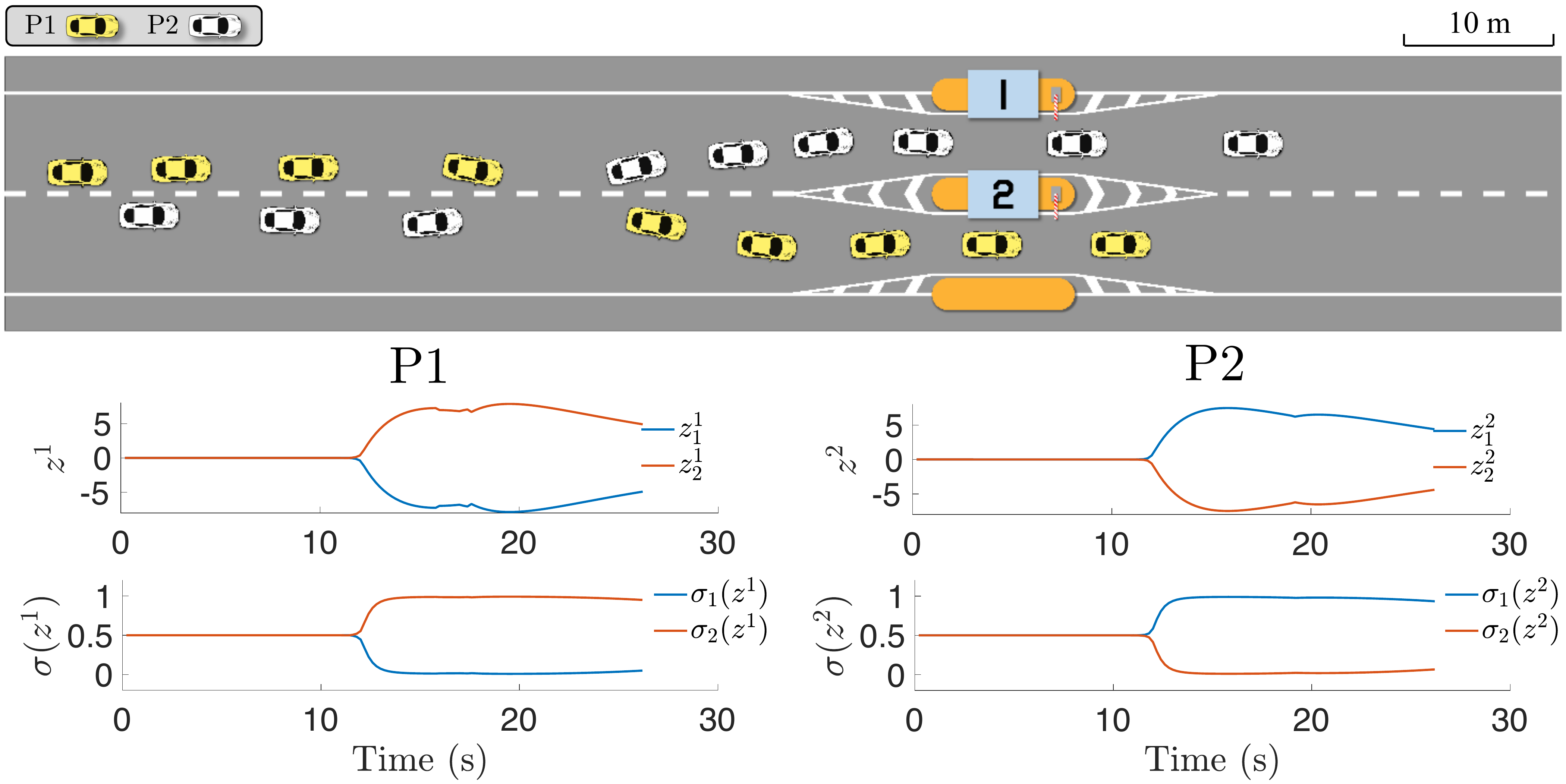}
  \caption{\label{fig:L1L0} State and opinion (original and softmax) trajectories with heterogeneous cost weights. Car 1 uses the L1-QMDP policy and car 2 uses the L0-QMDP policy.
  }
\end{figure}

\section{Conclusions}
\label{sec:conclusions}
We proposed a principled algorithmic approach for synthesizing game-induced nonlinear opinion dynamics (GiNOD) based on the value functions of dynamic games under different agent intent parameters.
In particular, we provided a detailed stability analysis for GiNOD in the two-player two-option case.
Finally, we developed a trajectory optimization algorithm that uses opinions evolved via GiNOD as guidance.
Future works include generalizing the stability analysis to the multi-player multi-option case and demonstrating our approach with hardware robotic systems.

\printbibliography{}

\end{document}